\newcommand{\eps}{\epsilon}
\newcommand{\dist}{\mathsf{dist}}
\newcommand{\Dist}{\mathsf{Dist}}
\begin{document}

\title{Testing Membership  for Timed Automata}%\thanks{Grants or other notes
%about the article that should go on the front page should be
%placed here. General acknowledgments should be placed at the end of the article.}

%\subtitle{Do you have a subtitle?\\ If so, write it here}

%\titlerunning{Short form of title}        % if too long for running head

\author{Richard Lassaigne       \and
        Michel de Rougemont %etc.
}

%\authorrunning{Short form of author list} % if too long for running head

\institute{R. Lassaigne \at
              University of Paris Cit\'e, CNRS, IMJ-PRG\\
              \email{lassaigne@math.univ-paris-diderot.fr}           %  \\
%             \emph{Present address:} of F. Author  %  if needed
           \and
           M. de Rougemont \at
              University Paris II, CNRS, IRIF\\
              \email{mdr@irif.fr} 
}

\date{Received: date / Accepted: date}
% The correct dates will be entered by the editor

\maketitle

\begin{abstract}

 Given a timed automaton which admits thick components and a  timed word $w$, we present a tester which decides if $w$ is in the language of the automaton or if $w$ is $\eps$-far from the language, using finitely many samples taken from the weighted time distribution $\mu$ associated with the input $w$.
We introduce a distance between timed words, the {\em timed edit distance},  which generalizes the classical edit distance. A timed word $w$ is $\eps$-far from a timed  language if its relative distance to the language is greater than $\eps$.\\\\
\end{abstract}

\keywords{Property testing,  Approximation algorithms, Timed automata}
%\newpage
%\tableofcontents
%\vspace{6cm}
\section{Introduction}

We study  the Membership problem of {\em timed words} for timed automata: given a non deterministic timed automaton and a timed word $w$, decide if $w$ is accepted. 
  We introduce a {\em timed edit distance} between timed words and study how to distinguish an accepted timed word from a timed word which is $\eps$-far
from the language of accepted words.  We follow the property testing approach with this new distance between timed words. Samples are taken following the weighted time distribution $\mu$ where the probability to choose a letter is proportional to its relative time delay or duration which we call its {\em weight}. The tester takes samples  which are factors of weight at least $k$, taken from the distribution $\mu$. 
Such samples can also be taken from a stream of timed words, without storing the entire input. \\

The Membership problem has been studied in \cite{AKV98,AM04} and in \cite{A18} when the input is both the automaton and the word $w$. In this case, the problem is shown to be  NP-complete.  In our situation, the automaton is fixed and the timed word $w$ is of arbitrary weight.\\

We consider timed automata  \cite{AD94} and construct the associated region automata with $m$ states.   Let $G$ be the graph whose nodes are the {\em stronly connected components} and the transient nodes of the region automaton.  In the rest of the article, a {\em component} is  a stronly connected components of $G$.  We assume the components are  {\em thick}  or of non-vanishing entropy \cite{A15}. Let $B$ be the maximum constant appearing in a  time constraint of an automaton,  let $l$ be the  number of  components and transient nodes with an outgoing transition having an unbounded guard, of a maximal path  in $G$ and $\kappa$ be a function of the number of clocks.   We require $l$ independent samples $(u_1,...u_l)$ of $\mu$, each $u_i$ is a factor of $w$ of weight $k\geq 24l.\kappa.m.B/\eps$. It guarantees that if a timed word of  total weight $T$ is $\eps$-far from the language of the timed automaton, it will be rejected with constant probability, i.e. independent of $T$.\\

Given a path $\Pi$ in $G$ and samples $(u_1,...u_l)$ from $\mu$,  definition \ref{compdef} specifies when these samples are compatible with  $\Pi$. The tester checks if there is a maximal $\Pi$ such that the $l$ samples are compatible for this $\Pi$. It rejects if no $\Pi$ is compatible with the samples. \\

The main result, theorem \ref{theo}, shows that Membership of timed words for automata with thick components is testable. First,  lemma \ref{correct} guarantees that a  word  of the language of the timed automaton is accepted by the tester. To prove that an  $\eps$-far timed word is rejected with constant probability, we  construct a corrector for a single component $C$ (lemmas 3 to 7) of the region automaton and  prove the result, i.e. theorem \ref{theoc}, in this case. We extend it to  a  sequence $\Pi$ (lemmas 8 to 13)  to prove the main result.\\

In the second section, we fix our notations  of timed automata and recall the definitions of thick and thin components. In the third section, we define the timed edit distance in the property testing context. In the fourth section, we define our membership tester. In the fifth section we give its analysis and prove the main results.

%\cite{FMRS16}.

\section{Timed automata}

 Let $X$ be a finite set of variables, called clocks. A clock valuation over $X$ is a mapping $v: X \longrightarrow \mathbb{R}^+$ that assigns to each clock a time value. For each $t\in  \mathbb{R}^+$, the valuation $v+t$ is defined by $\forall x \in X ~~(v+t)(x) = v(x) +t$. A {\em clock constraint} over $X$, also called a {\em guard}, is a conjunction $g$ of atomic constraints of the form: $x \bowtie c$ where $x \in X$, $c \in \mathbb{N}$ and $\bowtie \in \{<, \leq , =, \geq ,>\}$.  Let $\mathcal{C}(X)$ be the set of all clock constraints. We write $v \models g$ when the clock valuation $v$ satisfies the clock constraint $g$ and we note $[g]$ the set of clock valuations satisfying $g$. For a subset $Y$ of $X$, we denote by $[Y \leftarrow 0]v$ the {\em reset} valuation such that for each $x \in Y$, $([Y \leftarrow 0]v)(x)=0$ and for each $x \in X\backslash Y$, $([Y \leftarrow 0]v)(x)=v(x)$. \\

 A timed automaton is a tuple $\mathcal{A} = (\Sigma ,Q , X, E, I, F)$ where $\Sigma$ is a finite set of events, $Q$ is a finite set of locations,  $I \subseteq Q$ is the set of initial locations, $F \subseteq Q$ is a set of final locations and $E \subseteq Q \times (\mathcal{C}(X) \times \Sigma \times 2^X) \times Q $ is a finite set of transitions.
 A transition is a triple 
 $(q, e, q')$ where $e= (g , a, Y) $, i.e. $g$ is the clock constraint, $a\in \Sigma$ and $Y$ is the set of clocks which are reset in the transition. Let $\kappa=2^{2^{\mid X \mid}}$ an  important parameter used in section \ref{analysis}.\\

 A state is a tuple $(q,v)$, a location $q$ and a valuation of the clocks $v$. Let $B$ be the maximum value of the constant $c$ in the atomic constraints. In this paper, $s_0$ is always the initial state, all the states are accepting and the transitions are given by figures such as Figure \ref{at0}.

\begin{figure}[h]  
\begin{center}  

\includegraphics[height=7.69cm]{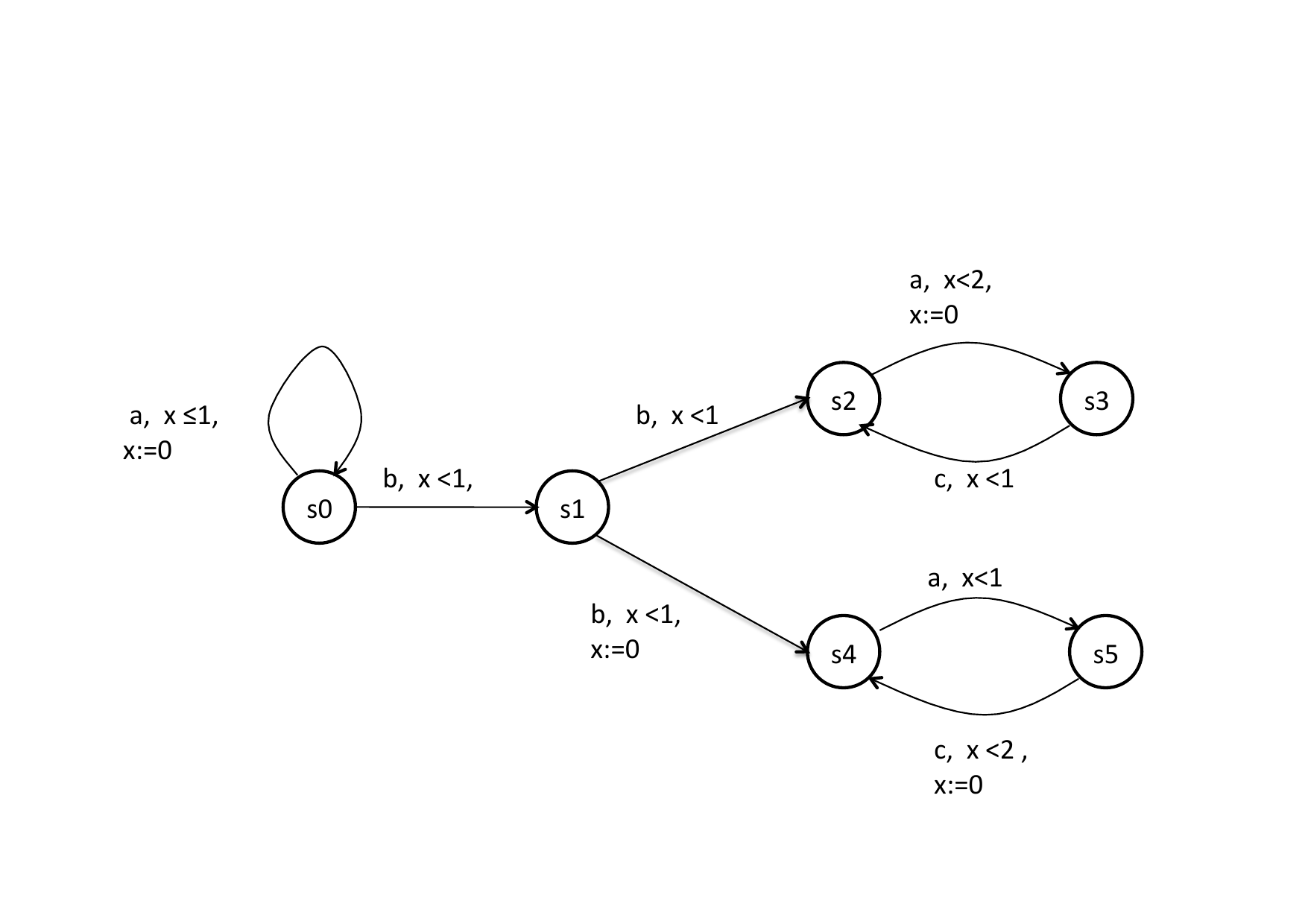}  

\caption{Timed  automaton $A_0$} \label{at0}
\end{center}  
\end{figure}

\subsection{Timed words}\label{tw}
A timed word $w$ is a sequence $(a_i,t_i)_{ 1 \leq i \leq n}$ where $a_i \in \Sigma$ and $t_i$ is a strictly monotonic sequence of values in $\mathcal{R}^+$. 
A path $\pi$ in $\mathcal{A}$ is a finite sequence of consecutive transitions:
 $(q_{i-1}, e_i, q_i)_{ 1 \leq i \leq n}$ where $(q_{i-1}, e_i, q_i) \in E$ and $e_i = (g_i , a_i, Y_i) $  where $g_i \subseteq \mathcal{C}(X)$, $a_i \in \Sigma$ and
$Y_i \subseteq X$, for each $ i \geq 0$.  The path $\pi$ is  {\em accepting} if it starts in an initial location $q_0 \in I$ and ends in a final location $q_n \in F$.
For a timed word $w$, let $untime(w)$ be the sequence of letters $(a_i)_{ 1 \leq i \leq n}$.
A {\em run of the automaton along the path $\pi$ } is a sequence:
 $$(q_0,v_0) \xrightarrow[t_1]{g_1,a_1,Y_1} (q_1,v_1) \xrightarrow[t_2]{g_2,a_2,Y_2} (q_2,v_2) \dots \xrightarrow[t_n]{g_n,a_n,Y_n} (q_n,v_n)$$
where $(a_{i}, t_i)_{ 1 \leq i \leq n}$ is a timed word and $(v_i)_{ 1 \leq i \leq n}$ a sequence of clock valuations such that:
$$(*) ~~~~\forall x \in X  ~~ v_0(x)=0 ~~ v_{i-1}+(t_i-t_{i-1}) \models g_i $$
$$ v_i= [Y_i \leftarrow 0](v_{i-1}+(t_i-t_{i-1}))   $$

We read $a_i$ for a period of time $t$ such that 
$v_{i-1} +t \models g_i$ and $v_i(x)=0$ if $x \in Y_i$, $v_i(x)=v_{i-1}(x) +t$ if $x\notin Y_i$. 
The $Y_i$ define the resets on each transition. A {\em local run} is a run where $(q_0,v_0) $
can be arbitrary.
The label of the run is the timed word $w=(a_i,t_i)_{ 1 \leq i \leq n}$, also written
$w=(a_i,\tau_i)_{ 1 \leq i \leq n}$ to use the relative time delays $\tau_i=t_i - t_{i-1}$ for $i>1$ and $\tau_1=t_1$. Such a run will be denoted by $(q_0,v_0) \xrightarrow{w} (q_n,v_n)$.\\

A timed word $w$ is  accepted by the timed automaton if it labels an accepting path $\pi$. The set of all finite timed words accepted by $\mathcal{A}$ is denoted by $L_f(\mathcal{A})$. \\

Given a timed word $w$, a factor is  a subsequence $(a_j, \tau_j)_{ j=i,i+1,....i+l}$ starting
in position $i$. Its weight is $\sum_{ j=i,....i+l} \tau_j $ and its relative weight is:

\[\frac{\sum_{ j=i,....i+l} \tau_j } {\sum_{ j=1,....n} \tau_j }\]

% An Automaton is {\em strongly Non Zeno} if all accepting runs from the initial state are non Zeno.

%% We now introduce the notion of a {\em word compatible with an automaton $\mathcal{A}$ }, if there is a run from some state with the word as label.

%%A {\em word  $w=(a_i,t'_i)_{ i \geq 0}$ is compatible  for the automat$[Y \leftarrow 0] R \cup R' $[Y \leftarrow 0] R \cup R' on $\mathcal{A}$ } if there exists  a state $(q,v)$ and a path  $\pi$  
%% such that $(*)$ is satisfied and $t_i=t'_i$ for $ i \geq 0$. Let  $\psi(\pi): \bigwedge \varphi_i$, the conjunction of the constraints $ \varphi_i$ on each edge satisfied by $w$. 
%% \end{df}

\subsection{Region automata}\label{ra}

Let $X$ be a set of clocks and $\mathcal{C}$  be a finite subset of $\mathcal{C}(X)$. A finite partitioning $\mathcal{R}$ of the set of valuations is a {\em set of regions} for the constraints 
$\mathcal{C}$ if the following compatibility conditions are satisfied:
\begin{enumerate}
\item $\mathcal{R}$ is {\em compatible with the constraints} $\mathcal{C}$:   for every constraint $g \in \mathcal{C}$, and every $R \in \mathcal{R}$, either $ R \subseteq [g] $ or $ [g] \cap R = \emptyset$,
\item $\mathcal{R}$ is {\em compatible with elapsing of time}:  for all $R, R' \in \mathcal{R}$, if there exists some $v \in R$ and $t \in \mathbb{R}^+$ such that $v+t \in R'$, then for every $v' \in R$, there exists some $t' \in  \mathbb{R}^+$ such that $v'+t' \in R'$,
\item $\mathcal{R}$ is {\em compatible with resets}: for all $R, R' \in \mathcal{R}$, for every subset $Y \subseteq X$, if $[Y \leftarrow 0] R \cap R' \neq \emptyset$, then $[Y \leftarrow 0] R \subseteq R' $.
 \end{enumerate}
 
 $\mathcal{R}$ defines an equivalence relation $\equiv_{\mathcal{R}}$ over valuations:
 $v \equiv_{\mathcal{R}} v' $ if for each region $R$ of $\mathcal{R}$, $v \in R \Longleftrightarrow v' \in R$. From a set of regions $\mathcal{R}$ one can define the time-successor relation: a region $R'$  is a time-successor of a region $R$ if for each valuation $v\in  R$, there exists a $t \in \mathbb{R}^+$ such that $v+t \in R'$.\\
   
  %%which is a finite automaton whose set of states is $\mathcal{R}$ and whose transitions are of two sorts:
%%\begin{itemize}
%%\item $R \overset{\varepsilon}{\longrightarrow} R'$ if for every valuation of $R$, it is possible to let some time elapse and reach $R'$,
%%\item $R \overset{Y}{\longrightarrow} R'$ if, from $R$, $R'$ can be reached by resetting clocks in $Y$.
%%\end{itemize}

Let $\mathcal{A}$ be a timed automaton with a set of constraints $\mathcal{C}$ and $\mathcal{R}$ be a finite set of regions for $\mathcal{C}$. The {\em region automaton} $\mathcal{A}_{\mathcal{R}}$ is the finite automaton defined by: 

\begin{itemize}
\item the set of states is $Q \times \mathcal{R}$,
\item the initial states are $I \times \{R_0\}$, where $R_0$ is the region containing the valuation assigning $0$ to each clock,
\item the final states are $F \times \mathcal{R}$,
\item there is a transition $(q,R) \xrightarrow {g,a,Y} (q',R')$ whenever there exists a transition $q \xrightarrow{g,a,Y} q'$ in $\mathcal{A}$ and a region $R''$ which  is a a time-successor of $R$, satisfies $g$ and $R' =  [Y \leftarrow 0] R''$.
\end{itemize}

Alur and Dill have shown \cite{AD94} how to construct a set of regions, and the size of the region automaton is  exponential in the number of clocks.
As $\mathcal{A}_{\mathcal{R}}$ is a finite automaton,  for every timed automaton $\mathcal{A}$ for which we can construct a set of regions, we can decide reachability properties using the region automaton construction. For a run of the automaton $\mathcal{A}$ of the form:
 $$(q_0,v_0) \xrightarrow{a_1,t_1} (q_1,v_1) \xrightarrow{a_2,t_2} (q_2,v_2) \dots \xrightarrow{a_n,t_n} (q_n,v_n)$$
 let its projection  be the sequence
$$(q_0,R_0) \xrightarrow{a_1} (q_1,R_1) \xrightarrow{a_2} (q_2,R_2) \dots \xrightarrow{a_n} (q_n,R_n)$$
where $(R_i)_{ 1 \leq i \leq n}$ is the sequence of regions such that $v_i \in R_i$ for each $1 \leq i \leq n$. From the definition of the transition relation for $\mathcal{A}_{\mathcal{R}}$, it follows that the projection is a run of $\mathcal{A}_{\mathcal{R}}$ over $ (a_i)_{ 1 \leq i \leq n}$.\\

Given some component $C$ of the region automaton, a timed word $w$ is {\em $C$-compatible} if there exists a local run $(q,v) \xrightarrow{w} (q',v')$ such that its projection is in $C$. Let $L_f(C)$ be the set of timed words $w$ which are $C$-compatible. \\

\begin{figure}[h]  
\begin{center}  

\includegraphics[height=7.69cm]{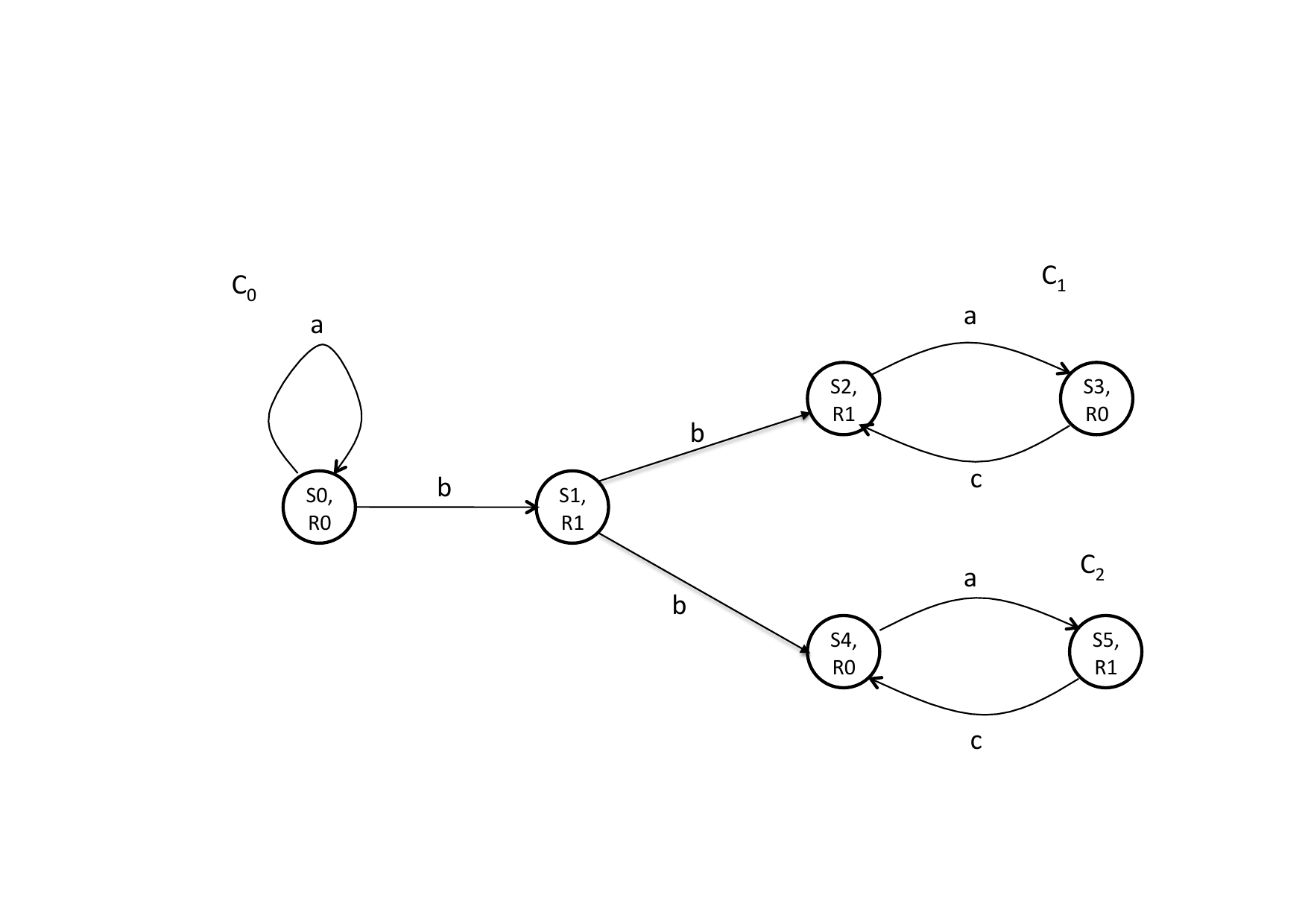} 
\centering
\caption{Region automaton of $A_0$.} 
\label{ra0}
\end{center}  
\end{figure}

The region automaton of the timed automaton $A_0$ of Figure \ref{at0} is in  Figure \ref{ra0}: the regions are $R_0: x=0$ and $R_1: 0<x<1$.  There are three components $C_0, C_1, C_2$.

%The membership of {\em timed words} for timed automata is an NP-complete problem
%\cite{AKV98,AM04},  when  we consider the simple constraints used in our definition. 

\subsection{Robustness for timed systems}

Timed automata assume perfect clocks and perfect precision. The relaxation to a robust acceptance was introduced in \cite{HR00} where the reachability is undecidable. Imperfect clocks with a drift are considered in
\cite{Pur00,ATR05}, and uncertainty in the guards is introduced in \cite{DDMR08}.\\

A survey of the robustness in timed automata is presented in \cite{BMS13}.  We introduce a different approach, with a natural distance between timed words which extends to a distance between a timed word and a language $L$ of timed words.  We then show that  the approximate membership problem  becomes easy, in this setting. Precisely, we provide an $O(1)$ algorithm using an approximate decision.

\subsection{Thin and thick components}

For a component $C$, a  {\em progress cycle} is a cycle where every clock is reset on some edge of the cycle. A  {\em forgetful cycle}  $\pi_f$ is a subcase where for  all  
state $(q,R)$ on the cycle, for all $v,v' \in R$ we can find a word $w$ such that:
$$(q,v) \xrightarrow{w} (q,v') $$
By extension, given two states $(q, R)$ and $(q', R')$ of the region automaton, we can use the forgetful cycle to link two states $(q, v), v\in R$ and $(q', v'), v'\in R'$. Connect first $(q, R)$ to the forgetful cycle $ \pi_f$, follow $\pi_f$ and then connect to $(q', R')$. We can then connect $(q, v), v\in R$ and $(q', v'), v'\in R'$.\\

In \cite{A15},  components with a forgetful cycle are called {\em thick}. Components which are not thick are {\em thin }.  We assume that all the components are thick, i.e. admit forgetful cycles, and we use this hypothesis in a fundamental way. The automaton $A_0$ of Figure \ref{at0} has thick components. In contrast the automata $A_1$ and $A_2$ of Figure \ref{thin1} have distinct thin components. Their  thick components are identical. The thin component of $A_2$ has $2$ clocks $x,y$, and is called the  {\em twin thin} component in \cite{A15}.

\begin{figure}[h]  
\begin{center}  

%includegraphics[height=7.69cm]{at0.pdf}  

\includegraphics[height=7.69cm]{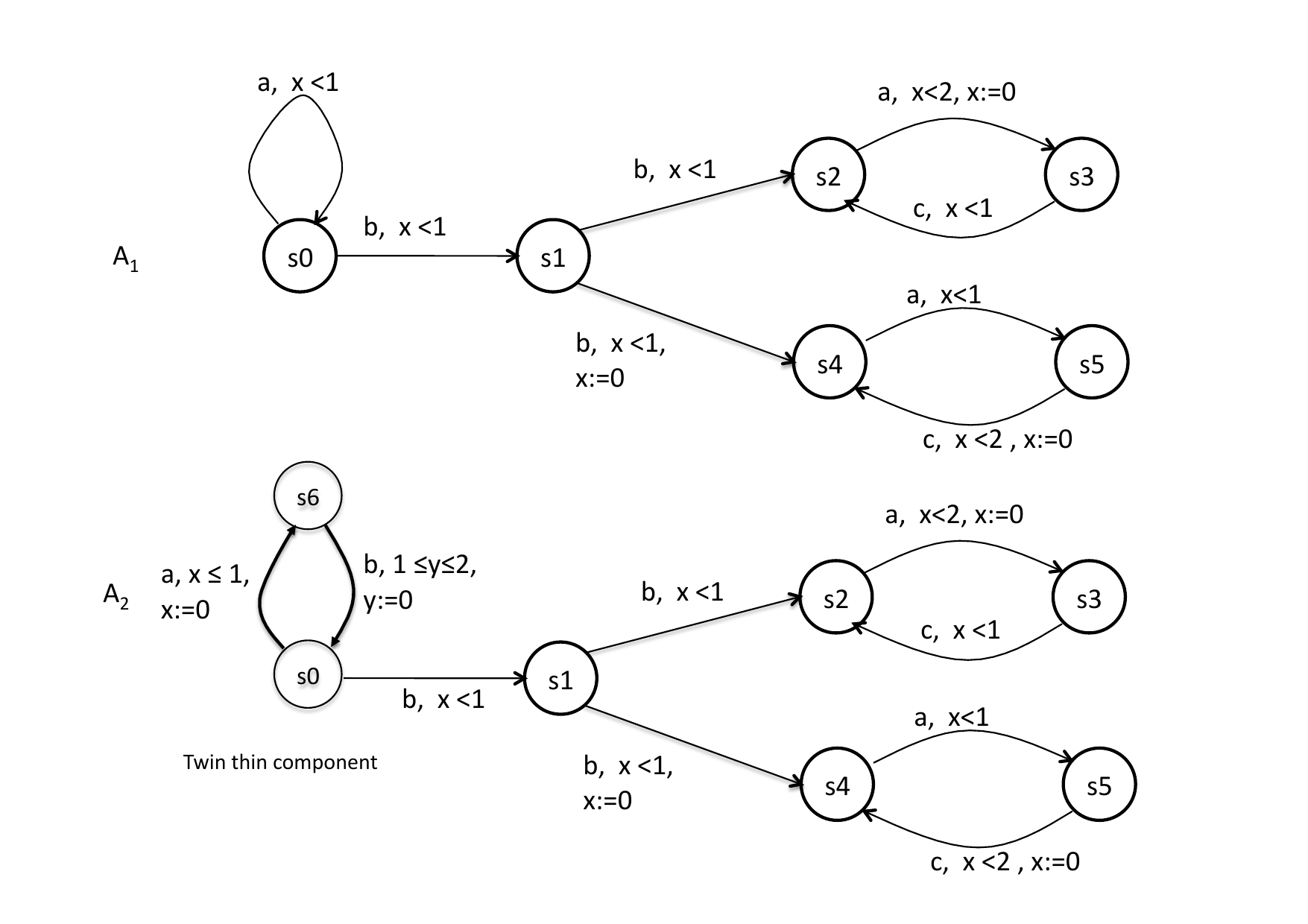}  
\caption{Timed  automata $A_1$ and $A_2$ with a thin component} \label{thin1}
\end{center}  
\end{figure}

\section{Property Testing}

For approximate decision problems,  the approximation is applied to the input
and suppose  a distance between input structures. An 
$\eps$-tester for a property $P$ accepts all
inputs which satisfy the property and rejects with high probability all inputs which
are  $\eps$-far from inputs that satisfy the property.
The approximation on the input
 was implicit in {\em Program Checking}~\cite{bk95,blr93,rs96},
in {\em Probabilistically Checkable Proofs} (PCP)~\cite{as98}, and explicitly studied for graph properties under the context of
property testing~\cite{ggr98}.\\

These restrictions allow for sublinear algorithms and even $O(1)$
time algorithms, whose complexity only depends on $\eps$.
Let $\mathbf{K}$ be a class of finite structures with a normalized
distance $\dist$ between structures, i.e.  $\dist$ lies
in $[0,1]$.
For any $\eps>0$, we say that $U,U'\in\mathbf{K}$ are {\em $\eps$-close} if their
distance is at most $\eps$. They are {\em $\eps$-far} if they are not $\eps$-close.
In the classical setting, the satisfiability of a property $P$ is the decision problem
whether $U$ satisfies $P$ for a structure $U \in \mathbf{K}$  and a
property $P\subseteq \mathbf{K}$.
A structure $U\in\mathbf{K}$ {\em $\eps$-satisfies} $P$, or  $U$ is $\eps$-close to $\mathbf{K}$
if $U$ is $\eps$-close to some $U'\in\mathbf{K}$
such that $U'$ satisfies $ P$. We say that $U$ is $\eps$-far from $\mathbf{K}$
 if $U$ is not $\eps$-close to $\mathbf{K}$.\\
 
\begin{definition}[Property Tester~\cite{ggr98}]\label{def1}
Let $\eps > 0$.
An {\em $\eps$-tester} for a property $P\subseteq\mathbf{K}$ is a randomized
algorithm $A(\eps)$ such that, for any structure $U\in\mathbf{K}$ as input:\\
(1) If $U$ satisfies $ P$, then $A(\eps)$ accepts;\\ % with probability at least $2/3$;\\
(2) If $U$ is $\eps$-far from $ P$, then $A(\eps)$ rejects with probability at least $2/3$.%
\footnote{The constant $2/3$ can be replaced by any other constant $0<\gamma<1$ by
iterating $O(\log (1/\gamma))$ the $\eps$-tester and accepting iff all the
executions accept.}
\end{definition}

A {\em query} to an input structure $U$ depends on the model for accessing the
structure.
For a word $w$, a query asks for the value of $w[i]$, for some $i$.
For a tree $T$, a query asks for the value of the label of a node $i$,
and potentially for the label of its $j$-th successors, for some $j$.
For a dense graph a query asks if there exists an edge between nodes $i$ and $j$.
The {\em query complexity} is the number of  queries made to the structure.
The {\em time complexity} is the usual definition, where we assume that the
following operations are performed in constant time: arithmetic operations, a
uniform random choice of an integer
from any finite range not larger than the input size, and a query to the input.
%A language $L$ is {\em testable}

\begin{definition}
A property $P\subseteq\mathbf{K}$ is {\em testable},  if there exists
 a randomized algorithm $A$  such that, for every real $\eps>0$ as input,
 $A(\eps)$ is an $\eps$-tester of $P$ whose
 query and time complexities depend only on $\eps$ (and not on the input size).
\end{definition}

Property testing of regular languages was first considered in~\cite{AKNS00} for the
{\em Hamming distance},
 where the Hamming distance between two words is the
minimal number of character substitutions required
to transform one word into the other.
The (normalized) edit distance between two words (resp. trees) of size $n$
is the minimal number of insertions, deletions
and substitutions of a letter (resp. node) required to transform one word (resp. tree)
into the other, divided by $n$. \\

 The testability of regular languages on words and trees was studied in \cite{MR04}
for  the edit distance with {\em moves}, that
considers one additional operation:
moving one arbitrary substring (resp. subtree) to another position in one step.
This distance seems to be more adapted in the context of property testing, since their
tester is more efficient and simpler  than the one of~\cite{AKNS00},
and can be generalized to tree regular languages.
A statistical embedding of words
which has similarities with the Parikh mapping~\cite{par66} was developped in \cite{fmr2010}.
This embedding associates to every word a sketch of constant size (for fixed $\eps$)
which allows to decide any property given by some regular grammar or even some
context-free grammar.\\

We introduce a new distance on timed words and apply the property testing framework with this distance to the membership problem of timed automata. The size of the input
is $O(n.\log T)$ if $T=t_n$ is the absolute total time and numeric time values are written in binary. Parameters of the timed automaton, the number of states $m$ and the maximum value $B$ in the constraints, are considered as constants. We assume that $n \rightarrow \infty$ and take $n$ and $T$ as the main parameters of the input. We select a fixed number of random factors of $w$, i.e. samples of size independent of $n$ and $T$ as witnesses. \\

\begin{figure}[h]  
\begin{center}  

\includegraphics[height=8.69cm]{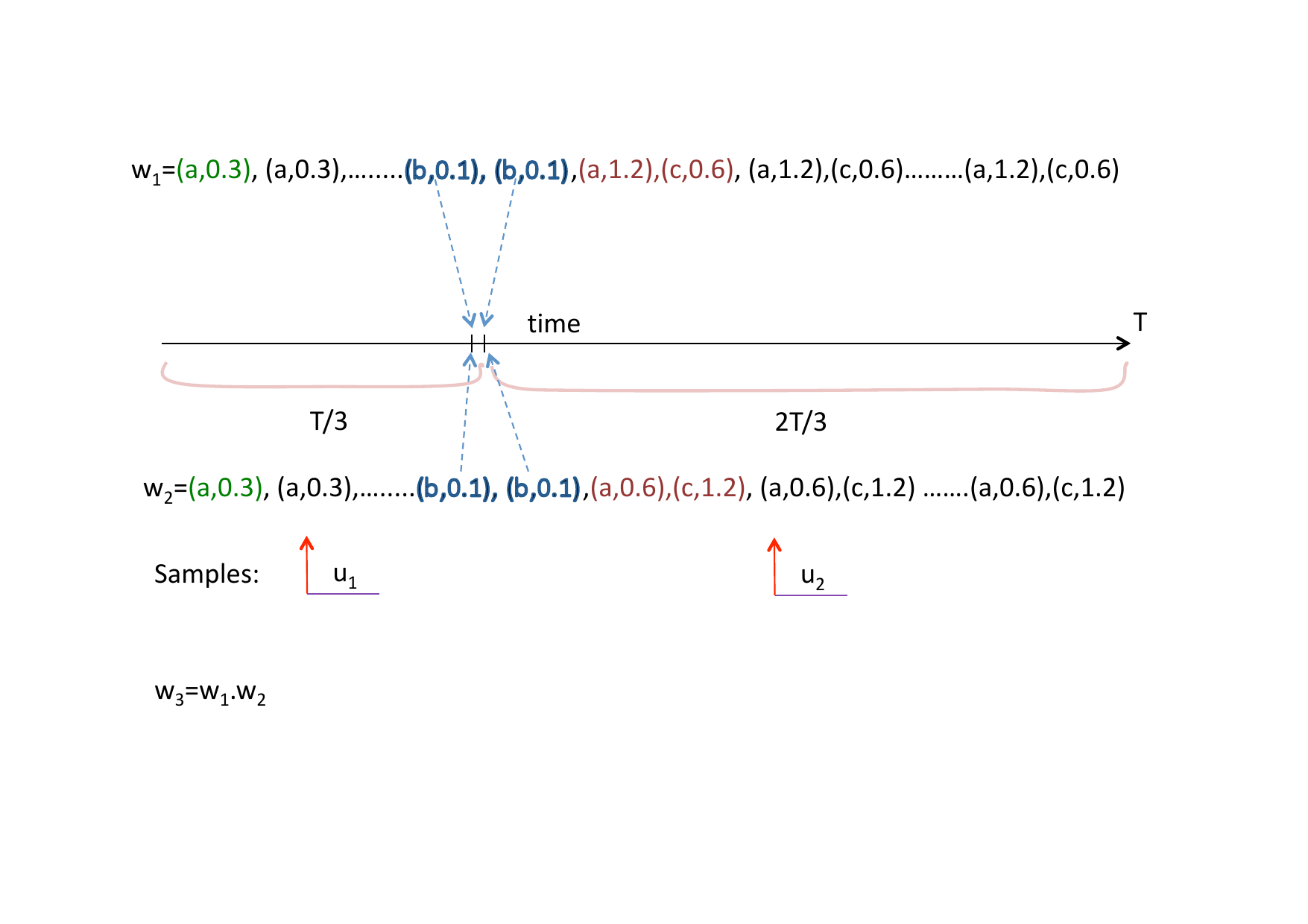} 
\centering
\caption{Close and $\eps$-far words for $A_0$: $w_1, w_2, w_3$} 
\label{runs}
\end{center}  
\end{figure}

The Figure \ref{runs} shows $3$ words:  $w_1$ of weight $T$ where we repeat first the pattern $(a,0.3)$ for a total weight of $T/3$ and later the pattern $ (a,1.2),(c,0.6)  $ for an approximate total weight of $2T/3$. 
For example:
$$w_1= (a,0.3)^{100},(b,0.1),(b,0.1),((a,1.2),(c,0.6) )^{33}$$
The weight of $w_1$ is $T\simeq 90$. 
The word $w_2$ of weight $T$  is similar except that the second iterared pattern is modified to $ (a,0.6),(c,1.2)  $. 
$$w_2= (a,0.3)^{100},(b,0.1),(b,0.1),((a,0.6),(c,1.2))^{33}$$
The word $w_3$ of weight $2T$ is the concatenation of $w_1$ followed by $w_2$. The Region automaton of $A_0$ in Figure \ref{ra0} has $3$ components $C_0,C_1,C_2$, transient states and two maximal sequences
 $\Pi_1=C_0.s_0.s_1.C_1$ and  $\Pi_2=C_0.s_0.s_1.C_2$, introduced in section \ref{tm}.  The word $w_1$ is  in $L_f(\Pi_1)$ but far from $L_f(\Pi_2)$: the pattern $ (a,1.2),(c,0.6)  $ has to be modified to $ (a,0.8),(c,0.6) $ for example. Symetrically the word $w_2$ is  in $L_f(\Pi_2)$ but far from $L_f(\Pi_1)$. Finally, $w_3$ is far from $L_f(\Pi_1)$ and from $L_f(\Pi_2)$ and therefore far from $L_f(A_0)$. We can detect these three situations with $2$ samples $(u_1,u_2)$ of finite weight with high probability, and in particular decide the potential right choice  of the non deterministic node $s_1$. In Figure \ref{runs}, $u_1$ is in the first part  $(a,0.3)^{100}$ of $w_1$ and $w_2$ with probability $1/3$ and $u_2$ in the second part with probability $2/3$.\\
 
 It is not possible to reach the same conclusion for the Automaton $A_1$ of Figure \ref{thin1}.  A finite sample on the thin component $C_0$ is not enough to witness whether or not the word is far from $L_f(A_0)$, as we may have to analyse much longer factors and dependencies between the weights of different samples. A more complex example, $A_2$ in Figure \ref{thin1} has the first component $C_0$ thin. It can be of  arbitrary high weight and therefore $2$ distinct samples may fall in this component. They become {\em dependent}, as the weights follow a long range dependency.
 
\subsection{Timed edit distance}

The classical  {\em edit distance} on words is a standard measure between two words
$w$ and $w'$. 
An  {\em  edit} operation is a {\em deletion}, an {\em insertion} or
a {\em modification of a letter}. The {\em absolute edit distance} is the minimum number
of edit operations to transform $w$ into $w'$ and the  {\em relative  edit distance} is the
absolute edit distance divided, by $\text{Max}(|w| ,|w'|)$. We mainly use the relative distance, a value
between $0$ and $1$. \\

Consider the  {\em timed edit} operations:
\begin{itemize}
\item Deletion  of  $(a,\tau)$ has cost $\tau$,
\item Insertion of $(a,\tau)$ has cost $\tau$,
\item Modification of $(a,\tau)$  into $(a, \tau')$ has cost $|\tau- \tau' |  $. 
\end{itemize}

A transformation is a sequence of operations which transform $w$ into $w'$, and the  total cost $D$ is the sum of the elementary costs.
The {\em absolute timed edit distance} $\Dist(w,w')$ between two  timed words
$w$ and $w'$ is the  minimal total cost over all possible    transformations from  $w$ into $w'$.

\begin{definition}

The {\em relative timed edit distance} between two  timed words
$w=(a_i,\tau_i)_{ 1 \leq i \leq n}$ and $w'=(a'_i,\tau'_i)_{ 1 \leq i \leq n'}$, is~:
 $$\dist(w,w')=\frac{1}{2}\cdot \frac{\Dist(w,w')}{\text{Max}(\sum_{i=1,...n} \tau_i  ,\sum_{i=1,...n'} \tau'_i )}$$
\end{definition}

If $T$ is the maximum time of $w$ and  $w'$, $\dist(w,w')$ is also $\frac{D}{2 \cdot T}$.
Two words $w,w'$ are $\eps$-close if
$\dist(w,w') \leq \eps$. The distance   between a word $w$ and   a language $L$ of timed words is defined as 
$\dist(w,L)=\text{Inf}_{w'\in L} \dist(w,w')$.  \\

{\bf Examples.} The absolute distance between  $(a,10)$  and $(a,13)$ is $3$.
 The absolute distance between $(a,10)$ and  $(b,10)$ is $20$. The absolute distance between $(a,1),(a,100)$  and
$(a,100),(a,1)$ is $2$, as shown below in section \ref{odist}.\\

To the best of our knowledge,  the  {\em relative timed edit distance} is a new distance, although other distances have been considered in the context of words with weights.
 
\subsection{Other distances}\label{odist}

The edit distance has been generalized to  a weighted edit distance where a fixed weight is associated to each letter and to
each pair of letters. The  cost of an insertion or deletion of a letter  is the weight of the letter and
the cost of a  modification  of $a$ by $b$ is the cost of the pair $(a,b)$.  For the  timed edit distance, the costs are not fixed   and depend on the positions of the letters.\\

In the context of timed words,  \cite{AM04} introduces a metric on timed words: for two words $w,w'$ of length $n$ such that 
$untime(w)=untime(w')$,  let $\dist(w,w')=\text{Max}\{| t_i-t'_i |,  0 \leq i \leq n\}$ where $t_i$ is the absolute time.  In \cite{K14}, this distance is generalized to
a vector whose first component
 captures the classical edit distance and the second component measures the maximum difference
 of the time intervals. It emphasizes the classical edit  distance between the words. As an example, the distance  of \cite{K14}   between the timed words $w=(a,1),(a,100)$  and
$w'=(a,100),(a,1)$ is the vector $(0,99)$ as the edit distance is $0$ and the 
maximum time difference is $99$.   In our framework, the absolute timed edit distance is $2$: we 
remove $(a,1)$ of the first timed word at the cost $1$ and add it after $(a,100)$ for the same cost.  An Hausdorff distance was introduced in \cite{A18}, to study similar Membership problems.\\

Another distance based on time intervals was introduced in \cite{D09}. A generalization of the classical edit distance to weighted automata was introduced in \cite{M02}. Other generalizations include the use of a permutation or of a {\em move}, the classical cut and paste. In this case a tester for the timed edit distance generalizes to these weaker distances.

\subsection{Algorithm for the timed edit distance}

The {\em absolute timed edit distance} between two words $w_1,w_2$ is computable in polynomial time by just generalizing the classical algorithm \cite{WF74} for the edit distance. Let  $A(i,j)$  be the array where
$w_1$ appears on the top row ($i=1$) starting with the empty character $\eps$, $w_2$ appears on the first column starting with  the empty character $\eps$ as in Figure \ref{ted}. 
For each letter $w$, let $w(i)$ be the relative time $\tau_i$. The value
$A(i,j)$ for $i,j>1$ is the absolute timed edit distance between the prefix of $w_1$ of length $j-2$ and the prefix of $w_2$ of length $i-2$.
Let $\Delta(i,j)=\mid \tau(i)-\tau(j)  \mid $ be the time difference between $w_1(i)$ and $w_2(j)$ if the letter symbols are identical, $\infty$ otherwise. It is the timed edit distance between two letters.\\

For $i,j >1$,  there is a simple recurrence relation  between $A(i,j),A(i-1,j),A(i,j-1) $ and $A(i-1,j-1)$, which reflects 3 possible transformations: deletion of $w_1(i-2)$, deletion of $w_2(j-2)$ or edition of the last letters. Hence:

$$A(i,j)=Min\{ A(i,j-1)+ w_1(i-2),  A(i-1,j)+w_2(j-2), A(i-1,j-1)+\Delta(i,j)  \}$$

In the example of Figure \ref{ted}, the absolute timed edit distance is 10, and we can trace the correct transformations  by tracing the minimum for each $A(i,j)$: in this case, we erase $(a,5)$ and reinsert it at the right place.
\begin{figure}[ht] 
  \centerline{\includegraphics[height=7cm]{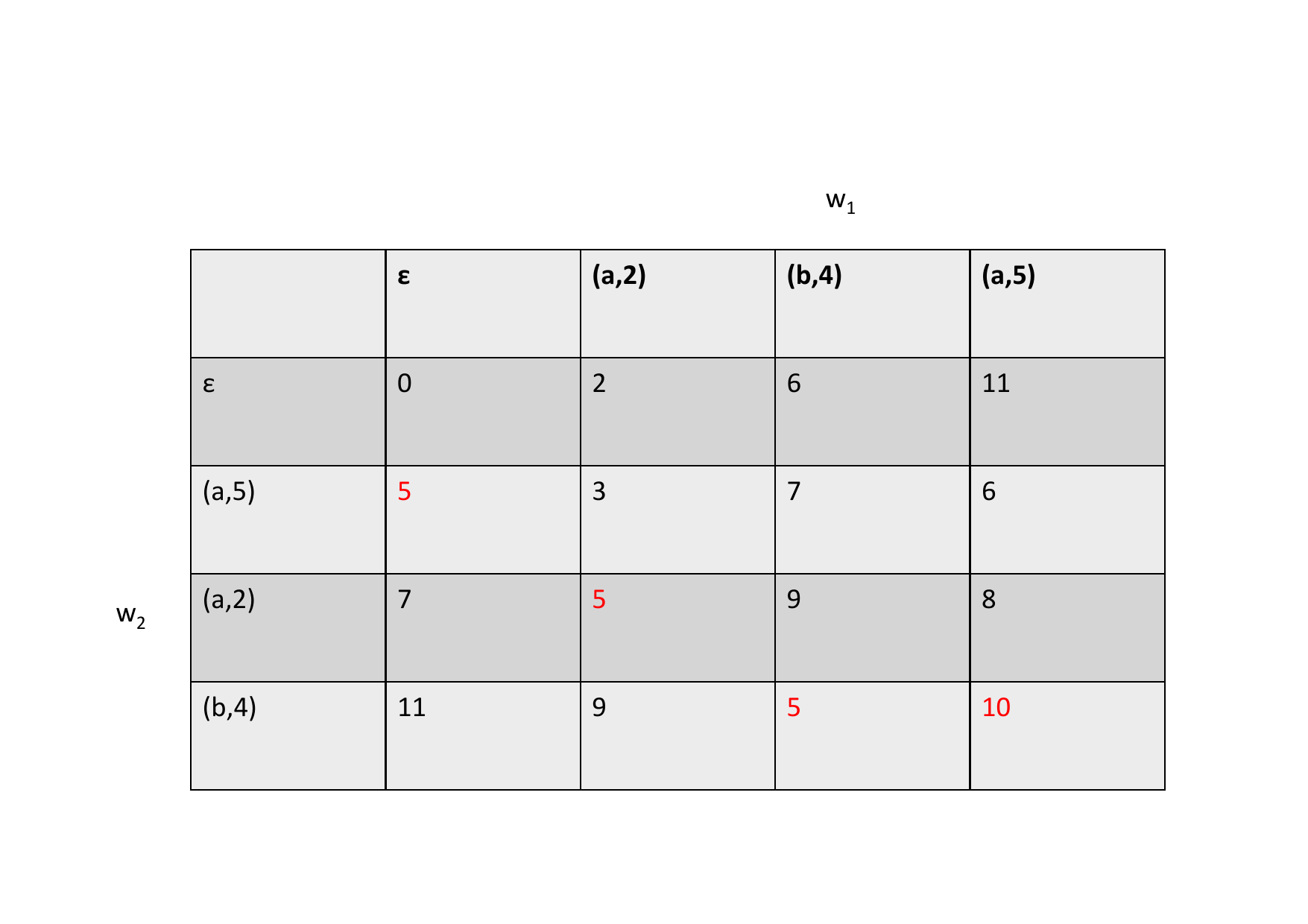}}
  \caption{Classical array $A(i,j)$ for timed edit distance between $w_1=(a,2),(b,4),(a,5)$ and $w_2=(a,5),(a,2),(b,4)$ }  \label{ted}
\end{figure}
\section{Testing  membership of timed words }\label{tm}

Given a timed word $w= (a_i, t_i)_{1 \leq  i \leq n}$, we want to {\em approximately} 
decide if $w$ is in language $L$, i.e. decide if $w$ is accepted or if  
$w$ is $\eps$-far from a language $L$, for the timed edit distance, i.e. if $\dist(w,L)\geq \eps$.  A query is specified by a weight $k$ and returns a factor of the word $w$ of weight  at least $k$ taken according to the distribution $\mu$, introduced in section \ref{mu}. This is the classical approximation taken
in Property Testing. \\

%We now introduce the notion of a {\em compatible word} $u$  for a component $C$ of the region automaton $\mathcal{A_R}$. It will be used  when $u$ is a factor of $w$ of  length $k$ and total time $\tau$. 

Assume the region automaton $\mathcal{A}_R$   has some components $C_i$ for $i=1,...,p$ and some  transient states $s_j$. If $v \in C$ and there is a transition from $v$ outside of $C$, we call $v$ a {\em limit node}.
Let $G$ be the graph of the components $G=(V,E)$ where the nodes $V$ are the components, the limit nodes and the  transient nodes. Edges of $E$ are:

$\bullet$ transitions  between a component $C$ and a limit node of $C$,

$\bullet$ transitions in  $\mathcal{A}_R$ between limit nodes and transient nodes,

$\bullet$ transitions in  $\mathcal{A}_R$ between two transient nodes,
 
$\bullet$ transitions  in $\mathcal{A}_R$ between a transient or limit node and another component $C'$.

\begin{definition}
A sequence $\Pi$ is a sequence of nodes of $G$ corresponding to a simple path  in $G$.
\end{definition}
We can order the sequence $\Pi$ with the natural order on the paths and speak of maximal sequences.
For the Region automaton of Figure \ref{ra0},
$\Pi_1=C_0.s_0.s_1.C_1$ and  $\Pi_2=C_0.s_0.s_1.C_2$  are the two maximal sequences.\\

An {\em extended component} $\bar{C_i}$ is a component $C_i$, possibly with a prefix of transient states. In Figure  \ref{ra0}, 
$\bar{C_1}=s_0.s_1.C_{1}$,  and $\bar{C_0}=C_{0}$. Then 
$\bar{\Pi}=\bar{C_{0}}.\bar{C_{1}}=C_{0}.s_0.s_1.C_{1}$.

\subsection{Samples and Compatibility}\label{mu}

The {\em weighted time distribution} $\mu$   selects a position   $1\leq j \leq n$ in a word $w= (a_i, t_i)_{1 \leq i \leq n}= (a_i, \tau_i)_{1 \leq i \leq n}$, i.e. a letter $(a_j, \tau_j)$, proportionally to its weight $\tau_j$:
 
$$Prob_{\mu}[j]=\tau_j/ t_n$$ 

We access the timed word with such a query which takes $1$ unit of time for the {\em query complexity} analysis.   A sample of weight at least $k$ of $w= (a_i,t_i)_{ 1 \leq  i \leq n}=(a_i, \tau_i)_{ 1 \leq  i \leq n}$ is a factor $u$ starting in position $j$ of weight   $ | u |=\sum_{i=j }^{i=j+p}\tau_i \geq k$ for the smallest possible $p$, if it exists. If we reach the end of $w$, we just have a sample of weight less than $k$. 
In practical situations, the exact time required for a query may vary. Consider two models: the model where we store the entire word $w$ and the streaming model where we read letters $(a_i, \tau_i)$ one by one and only store samples.\\

If we store the entire word, we choose a position $j$ 
by first choosing a uniform real value
 $i \in_r [0, t_n]$ and find  $j$ such that $t_{j-1}\leq i < t_j$ by dichotomy. We
first compare $i$ and $t_{n/2}$ and find the exact $j$ after at most $\log n$ steps.  There is a $O(\log n)$ overhead in this procedure.\\

%We use the absolute values to efficiently select  a sample, and the relative values for its weight. \\

In the streaming model, we can directly select $l$ distinct samples with a {\em weighted Reservoir sampling} \cite{V85} with no overhead. We take the first $l$ letters of the stream and for $j>l$ we keep the $j$-th letter with probability $l.\tau_j/t_j$. If this value is greater than $1$, we assume it is $1$. If we  keep the letter, we remove a random letter of the Reservoir with probability $1/l$ and replace it by the $j$-th letter. \\

We recall the classical argument which shows by induction on $n$, that the probability that a letter $l< j\leq n $ is in the Reservoir is
 $ l.\tau_j/ t_n$. It is true for $n=l+1$. If it is true for $n$, let us show that it is also true for $n+1$. The probability that the $j$-th letter is in the Reservoir at stage $n+1$ is:

$$ \frac{l.\tau_j}{t_n}[(1-\frac{l.\tau_{n+1}}{ t_{n+1}})+ \frac{l.\tau_{n+1}}{ t_{n+1}}.\frac{(l-1)}{l}]=\frac{l.\tau_j}{t_n}[ \frac{t_{n+1}-\tau_{n+1}}{ t_{n+1}}]= \frac{l.\tau_j}{t_{n+1}}$$

Let us justify this equality. The $j$-th letter is in the Reservoir at stage $n$ with probability $\frac{l.\tau_j}{t_n}$ by the induction hypothesis. It stays unchanged with probability $(1-\frac{l.\tau_{n+1}}{ t_{n+1}})$, when the letter $n+1$ is not kept, and with probability $\frac{l.\tau_{n+1}}{ t_{n+1}}.\frac{(l-1)}{l}$ when the letter $n+1$ is kept and the letter $j$  is not removed from the Reservoir. We extend the basic strategy of the Reservoir sampling which keeps $l$ independent samples to keep $l$ independent factors of weight $k$.  At each stage $n$, the $n$-th letter may be concatenated to the factors whose  weight has not yet reached $k$.\\

 Given a word $w$, we select $l$ independent samples of weight at least $k$ according to $\mu$, which we order as 
 $(u_1,....u_l)$. Notice that two samples  $u_i$ and $u_j$ of weight greater than $k$, where $i < j$,  may overlap. In this case, we merge the samples into a larger sample $u_i$. In the sequel, we assume all samples are disjoint which is possible if $T$ is large enough.
We now introduce the central notion of {\em compatibility} for an arbitrary  sequence $\Pi$ and a sequence of disjoint ordered factors 
$(u_1,....u_l)$ of a word $w$. \\

%\begin{figure}[h]  \begin{center}  

%\includegraphics[height=6.69cm]{ar0.pdf}  \caption{Two $\Pi$'s for  $A_0$$ \Pi_1=C_0.s_0.s_1.C_1$ and  $\Pi_2=C_0.s_0.s_1.C_2$ } 
%\end{center}  \end{figure}

 For each factor $u_i$ we extend the definition of compatibility introduced in section \ref{ra} for a component $C$ to a sequence $\Pi$. We examine if it could start from some transient state $s_j$ or from some component $C_j$ of $\Pi$ and end on a transient state or on a component.\\

In section \ref{ra}, we defined the notion of a $C$-compatible timed word $w$. There is a run such that its projection is in $C$. Similarly, we can say that there is a run such that its projection is in 
$\Pi$. It starts in some transient state of $\Pi$ or in some state of a connected  component $C$ of $\Pi$ and ends  later in $\Pi$.
We say that the projection of the run follows $\Pi$.

 \begin{definition}\label{compdef}
 A sample  $u_i$ is {\em $\Pi$-compatible} from state $s=(q,R)$ to state $s'=(q',R')$  if
$  \exists ~ v, v' ~(q,v)\stackrel{u_i}{\rightarrow} (q',v')$
with $v\in R$ and $~v' \in R'$ where $s$ and $s'$ are transient states of $\Pi$ or states  of components of $\Pi$. The projection of the run must follow $\Pi$.

A sequence of disjoint ordered factors  $(u_1,....u_l)$ of a word $w$ {\em is 
$\Pi$-compatible }  if each $u_i$ is compatible for $\Pi$ from some state $s_i$ to some state $s'_i$ and for $i=1,...l-1$  $s'_i$ and $s_{i+1}$ are either in the same  component of  $\Pi$ or $s_{i+1}$ is posterior to $s'_i$ in $\Pi$.

 \end{definition}
 
Let $L_f(\Pi)$ be the set of timed words $w$ which are $\Pi$-compatible, the {\em language of $\Pi$}, and similarly for $L_f(\bar{\Pi})$. \\

 \begin{definition}
A sequence of disjoint ordered factors  $(u_1,....u_l)$  of a word $w$ is 
compatible with a timed automaton $\mathcal{A}$ if there exists a sequence  $\Pi$ such that $(u_1,....u_l)$  is 
$\Pi$-compatible.
\end{definition}

%If $u_i$ is compatible from some $s_{i_1}$ to some $s_{i_2}$, then $u_{i+1}$ which does not overlap with $u_i$ can only start  after $s_{i_2}$. If $u_i$ is compatible from some $s_{i_1}$ to some $C_j$, then $u_{i+1}$ which does not overlap with $u_i$ can start  in $C_j$ or after.\\

The tester takes $l$  disjoint samples $(u_1,....u_l)$, each $u_i$  is of weight  at least $2k$,
 where $k=24.l.\kappa.m.B/\eps$,  which we order 
according to their position in $w$. 

%Adaptative tester
%\subsection{Zeno sequences !!!!!}

%A Zeno sequence is a sequence of increasing length
%$w=\Pi_{i=1,...n} (a_i, \tau_i)$  such that  there exists a bound $B$ such that $\sum_i \tau_i \leq B$,  as $n\rightarrow \infty$.\\

%xxxxx Not the definition we need.

\subsection{Compatibility properties}

Let $w $ be a timed word accepted by a timed automaton $ \mathcal{A}$.
What can be said about the compatibility of samples  $(u_1,....u_l)$?

\begin{lemma}
If $w \in L_f(\mathcal{A})$, then for all $l$ and for all  disjoint ordered $l$-samples $(u_1,....u_l)$ there is a sequence $\Pi$ such that 
 $(u_1,....u_l)$ is $\Pi$-compatible.

\end{lemma}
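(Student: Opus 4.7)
The plan is to exhibit the witnessing path $\Pi$ directly from an accepting run of $w$. Since $w \in L_f(\mathcal{A})$, fix an accepting run
$$(q_0,v_0) \xrightarrow{a_1,t_1} (q_1,v_1) \xrightarrow{a_2,t_2} \cdots \xrightarrow{a_n,t_n} (q_n,v_n)$$
and project it to the region automaton, obtaining a run $(q_0,R_0) \xrightarrow{a_1} (q_1,R_1) \xrightarrow{a_2} \cdots \xrightarrow{a_n} (q_n,R_n)$ starting at the initial state $s_0=(q_0,R_0)$. Reading this run off in the graph of connected components, record the sequence of transient states and connected components visited, using as distinguished exit state $\bar{s^i}$ of each component $C_i$ the actual state $(q_j,R_j)$ at which the run last leaves $C_i$. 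This yields a concrete $\Pi = s_0.s_1\ldots C_{i_1}[\bar{s^{i_1}}]\ldots C_{i_l}[\bar{s^{i_l}}]$ in the component graph.

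Next I show that every ordered $l$-sample $(u_1,\ldots,u_l)$ is $\Pi$-compatible. Each $u_i$ is a factor of $w$, say spanning positions $j$ to $j+k_i$. Restricting the run above to those positions gives a local run $(q_{j-1},v_{j-1}) \xrightarrow{u_i} (q_{j+k_i},v_{j+k_i})$. Its projection is a contiguous sub-run of the region-automaton run, and hence sits inside a contiguous sub-path of $\Pi$. Depending on where the endpoints of $u_i$ fall (transient state, inside a component, or at a distinguished exit state) this sub-run exactly witnesses one of the cases of compatibility in Definition~\ref{compdef}: if both endpoints are inside the same $C_j$, it is a $C_j$-local run; if they straddle transient states or move between components, the sub-run connects them along the prescribed states of $\Pi$. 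Because the factors $u_i$ occur in order along $w$ and the projected run follows $\Pi$ monotonically, the compatibility of the $u_i$ respects the order of $\Pi$.

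Finally, for the overlap clause, suppose $u_i$ and $u_{i+1}$ overlap; then their union is also a factor of $w$, to which the same restriction argument applies: the corresponding contiguous sub-run witnesses that the union is compatible for $\Pi$. Thus all conditions of Definition~\ref{compdef} are met, so $(u_1,\ldots,u_l)$ is $\Pi$-compatible and the lemma holds.

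I do not expect any serious obstacle here: the lemma is essentially an unfolding of the definitions, and the only mild care required is the bookkeeping of distinguished exit states $\bar{s^i}$ (to make sure the $\Pi$ we read off actually matches the concatenation convention $[\bar{s^i}].\bar{s^i}=[\bar{s^i}]$ used in the paper) and the handling of overlapping samples, both of which are inherited transparently from the single projected run.
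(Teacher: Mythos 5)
Your proposal is correct and follows essentially the same route as the paper: extract $\Pi$ from the (projection of the) accepting run of $w$, observe that each factor $u_i$ inherits a local run that witnesses its compatibility in the order prescribed by $\Pi$, and handle overlaps by passing to the union. The paper's own proof is just a terser statement of exactly this argument.
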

\begin{proof}
If $w \in L_f(\mathcal{A})$, there is a run for $w$, i.e. a  sequence $\Pi$ defined by the run from the origin state to some final state $q$. The independent samples  are  $\Pi$-compatible. $\Box$
\end{proof}

Consider the following decision procedure, Algorithm $A_2$, to decide if  $(u_1,....u_l)$ is $\Pi$-compatible. %We first enumerate all pairs $(s_i,s'_i)$ such that $u_i$ is compatible for $\Pi$  from $s_i$ to $s'_i$ and then use algorithm $A_1$. 
Let $u_i$ be a factor and $\Pi$ a sequence:\\

{\bf Algorithm $A_1(u_i,\Pi)$}.
{\em Enumerate all pairs $(s_i,s'_i)$ where
$s_i$ and $s'_i$ are either a transient state of $\Pi$ or a state of a component of $\Pi$. 
If there exists a pair $(s_j,s'_j)$ such that $u_i$ is compatible for $\Pi$ from $s_j$ to $s'_j$, then Accept else Reject. }\\

 $A_1$ solves a system of linear constraints for each $(s_i,s'_i)$ where the variables are the valuations on the states from $s_i$ to $s'_i$. We  accept if there is a solution to the system and reject otherwise. We extend $A_1$  to $A_2$ which takes  $(u_1,....u_l)$ the ordered samples as input,  instead of a single $u_i$.\\

{\bf Compatibility Algorithm $A_2((u_1,....u_l),\Pi)$}.
{\em  If for each $u_i$ there exists some 
pair $(s_i,s'_i)$ such that  Algorithm $A_1(u_i,\Pi)$  Accepts and 
if for $i=1,...l-1$  $s'_i$ and $s_{i+1}$ are either in the same  component of  $\Pi$ or $s_{i+1}$ is posterior to $s'_i$ in $\Pi$, then Accept else Reject. }

%\begin{lemma}
%Given a  state $(q,R)\in C$ and a valuation $v\in R$, there is a constraint  formula
%$\theta(v)$ such that the constraint $\theta(v)$ is satisfied for a $v$ iff $ (v,v')\in A_{((q,R),(q',R'))}  $.
%\end{lemma}

\subsection{Tester}

Given a timed automaton  $\mathcal{A}$ let $L_f( \mathcal{A})$ be the language accepted. 
Let $\mathcal{A_R}$ be the region automaton with $m$ states, and let $B$ be the maximal value used in the time constraints. The automaton is fixed and the timed word $w$ is the input of weight $T$. The tester has a query and time complexity independent of $T$. 
We first generate all the maximal $\Pi$, for the inclusion,  which include components and transient states. We consider  transient  states with unbounded transitions as specific extended components.   Let
$\bar{\Pi}=\bar{C_{i_1}}.\bar{C_{i_2}}....\bar{C_{i_l}}$  the corresponding sequences of extended components obtained in this way.
 We first define a Tester   along a $\bar{\Pi}$ and the final Tester considers all possible maximal $\bar{\Pi}$.\\

{\bf Tester along a path $\bar{\Pi}=\bar{C_{i_1}}.\bar{C_{i_2}}....\bar{C_{i_l}} $ }

{\bf Input}: timed word $w$,  $\eps$,

{\bf Output}: Accept or Reject\\

{\em 
1. Sample $l$ independent disjoint  factors $(u_1 <  u_2....< u_{l})$ of weight  $k\geq 24l.\kappa.m.B/\eps$ of $w$  for  the weighted time distribution $\mu$.

2.  Accept   if $A_2((u_1,....u_l), \Pi)$ accepts else Reject.\\
}

We then obtain the general tester for a regular timed language $L_f( \mathcal{A})$.\\

{\bf Tester  for $L_f( \mathcal{A})$}

{\bf Input}: timed word $w$, $\eps$

{\bf Output}: Accept or Reject\\

{\em  1.  Construct all the $\bar{\Pi}$ corresponding to maximal $\Pi$, of the
region automaton  $\mathcal{A_R}$, starting in  the initial state,

2. For each $\bar{\Pi}$, apply the Word Tester along $\bar{\Pi}$,

3. If there is a  
$\bar{\Pi}$  such that Word Tester along $\bar{\Pi}$ accepts, then Accept
   else Reject.}\\

\section{Analysis of the Tester }\label{analysis}

We have to verify the two properties of a Tester, given in  definition \ref{def1}.

\begin{lemma}\label{correct}
If $w \in L_f( \mathcal{A}) $ then the Word Tester  for $L_f( \mathcal{A}) $   always accepts.
\end{lemma}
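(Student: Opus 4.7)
The plan is to extract a witnessing path $\bar{\Pi}$ directly from the accepting run of $w$ and show that, along this particular $\bar{\Pi}$, the compatibility algorithm $A_2$ succeeds regardless of which samples the Tester draws. Since the general Tester accepts as soon as some $\bar{\Pi}$ works, this suffices.

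First, since $w \in L_f(\mathcal{A})$, fix an accepting run $(q_0,v_0) \xrightarrow{a_1,t_1} (q_1,v_1) \cdots \xrightarrow{a_n,t_n} (q_n,v_n)$ of $\mathcal{A}$ on $w$. Project this run into the region automaton $\mathcal{A}_{\mathcal{R}}$ to obtain a sequence of states $(q_i,R_i)$ with $v_i \in R_i$, as described earlier in the paper. Reading off this projection at the level of connected components and transient states yields a concrete sequence $\Pi = s_0.s_1\ldots C_{i_1}[\bar{s^{i_1}}]\ldots C_{i_l}[\bar{s^{i_l}}]$, where each distinguished state $\bar{s^{i_j}}$ is chosen as the last state visited in $C_{i_j}$ before the run leaves it. Let $\bar{\Pi}$ be the associated sequence of extended components. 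By construction this $\bar{\Pi}$ is enumerated at step~1 of the general Tester.

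Next, consider any set of $l$ ordered samples $(u_1 < \cdots < u_l)$ drawn from $w$ (after merging overlapping pairs as prescribed by the Tester). Each $u_j$ is a factor $(a_p,\tau_p)_{p=\alpha_j}^{\beta_j}$ of $w$, so the restriction of the fixed accepting run to the indices $\alpha_j,\ldots,\beta_j$ is a local run $(q_{\alpha_j-1},v_{\alpha_j-1}) \xrightarrow{u_j} (q_{\beta_j},v_{\beta_j})$. Its projection lies in $\bar{\Pi}$: depending on where $[\alpha_j,\beta_j]$ falls in the run, the projected endpoints are either transient states of $\bar{\Pi}$ or states of one of the components $C_{i_k}$, and the intermediate projection follows the order of $\bar{\Pi}$. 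Therefore $u_j$ is compatible for $\bar{\Pi}$ in the sense of definition~\ref{compdef}, witnessed by the pair $(s_j,s'_j)$ corresponding to the projected endpoints. Moreover, since $\alpha_1 \leq \beta_1 \leq \alpha_2 \leq \cdots$ (after overlap merging), the witnesses appear in the order of $\bar{\Pi}$.

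Finally, because a valid choice of endpoint pairs $(s_j,s'_j)$ exists for every $u_j$ and respects the order of $\bar{\Pi}$, the compatibility algorithm $A_2((u_1,\ldots,u_l),\bar{\Pi})$ accepts: algorithm $A_1$ will find each $(s_j,s'_j)$ by enumeration (the associated linear system of clock constraints is solvable, as the actual valuations $v_{\alpha_j-1},v_{\beta_j}$ of the real run form a solution), and $A_2$ will detect the ordered composition. Hence the Word Tester along $\bar{\Pi}$ accepts, and so does the general Word Tester. The proof is essentially a direct translation between runs and compatibility witnesses; the only mild care is in handling the overlap-merging step and in checking that the projected endpoints fall within the allowed types (transient state or state of $C_{i_k}$), which is immediate from the construction of $\bar{\Pi}$ from the run.
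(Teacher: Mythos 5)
Your proof is correct and follows essentially the same route as the paper's: take the $\bar{\Pi}$ induced by the accepting run of $w$, observe that every sample is a factor of that run whose projected endpoints witness compatibility in the order of $\bar{\Pi}$, and conclude that the Word Tester along this $\bar{\Pi}$ (and hence the general Tester) accepts. You merely spell out the details (projection to the region automaton, the endpoint pairs fed to $A_1$/$A_2$, and the overlap-merging step) that the paper's two-line proof leaves implicit.
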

\begin{proof}
Consider a run   of the automaton $\mathcal{A}$, labeled by  $w$. There exists a $\bar{\Pi}$ with
at most $l $ extended components such that all the
factors $u$ of weigth $k$ of $w$ are compatible for   $\Pi$.  Any sequence of ordered factors $(u_1,...u_l)$ is also 
$\Pi$-compatible. Hence the Tester accepts. $\Box$ \end{proof}

The more difficult task is to show that if $w $ is $\eps$-far from $ L( \mathcal{A}) $ then the Tester  will reject with constant probability.  
Equivalently, we could  show the contrapositive, i.e. if the Tester accepts with constant probability, then $w $ is $\eps$-close to  $ L( \mathcal{A}) $. \\

We construct a {\em corrector} for $w$ in order to prove this property. 
A corrector  transforms an incorrect $w$ into
a correct $w'$ with  timed edit operations. We first describe a corrector for a given component $C$ in section \ref{c1}, for an extended component $\bar{C}$ in section \ref{c2} and for a path
 $\bar{\Pi}=\bar{C_1},...\bar{C_l}$ of extended components  in section \ref{c3}. In the last case, we decompose a word  $w$ into $l$ factors which we will correct for each $\bar{C_i}$.
 In each case, the corrector  shows that if a timed word is $\eps$-far from  the corresponding language, then samples of a certain weight $2k$ will be incompatible  with constant probability.\\

\subsection{Correction and Tester for a component $C$}\label{c1}

$C$ is a thick component, i.e. admits a forgetful cycle \cite{A15}. In this case, from a state $(q,R)$ and $v\in R$, we can reach any
 $(q',R')$ and $v'\in R'$ with a small timed word, called a {\em link} in Lemma \ref{corr}. We then introduce a decomposition of a word $w$ into compatible fragments separated by {\em cuts with a cost}. In Lemma \ref{close}, we show that if the total relative weight of the cuts is small, then the word is $\eps$-close to $L_f(C)$. In Theorem \ref{theoc} we show that if $w$ is $\eps$-far, samples of weight at least $2k$, a function of $\eps$, are incompatible for $L_f(C)$ with constant probability.
 
 \begin{lemma}\label{corr}
 For all pairs of states
$ (q,R), (q',R')$ of a  thick component $C$  and for all valuations $v\in R, v'\in R'$, there exists a timed word $\sigma$ such that  $ (q,v) \xrightarrow{\sigma} (q',v')$ and the weight of $\sigma$ is less than $3.\kappa.m.B. $
\end{lemma}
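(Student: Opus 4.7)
The plan is to route the desired run through a forgetful cycle inside $C$, following the informal argument already sketched just above the lemma. Fix a forgetful cycle $\pi_f$ in $C$ passing through some state $(q_f, R_f)$. Since $C$ is strongly connected, there is a region-level path $P_1$ from $(q, R)$ to $(q_f, R_f)$ and a region-level path $P_2$ from $(q_f, R_f)$ to $(q', R')$. The timed word $\sigma$ will be produced as a concatenation $\sigma_1 \cdot \sigma_f \cdot \sigma_2$, where $\sigma_1$ realizes $P_1$, $\sigma_f$ adjusts the valuation along $\pi_f$, and $\sigma_2$ realizes $P_2$.

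First I would realize $P_1$ concretely starting from $(q, v)$. The compatibility axioms of the region partition (compatibility with elapsing of time and with resets, conditions 2 and 3 in the region definition) guarantee that every region transition of $P_1$ can be fired by an appropriate time delay followed by the prescribed reset; this produces a timed word $\sigma_1$ and an endpoint $(q_f, v_1)$ with $v_1 \in R_f$.

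Next I would work backwards from the target. For any $v_2 \in R_f$, concretely realizing $P_2$ starting from $(q_f, v_2)$ yields an endpoint $(q', \phi(v_2))$ with $\phi(v_2) \in R'$. The key claim is that $v_2$ can be chosen so that $\phi(v_2) = v'$; if necessary one prepends to $P_2$ extra traversals of $\pi_f$, which reset every clock (since $\pi_f$ is in particular a progress cycle) and then allow free choice of time delays along the cycle, shaping the incoming valuation at will. Combined with the affine-with-reset structure of clock dynamics along a fixed region path, this lets us hit any prescribed $v' \in R'$. This surjectivity step is the main technical obstacle; it is where the thickness hypothesis and the characterization of thick components in~\cite{A15} carry the weight of the argument, as thin components would fail exactly this controllability property.

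Finally, having $v_1, v_2 \in R_f$, the defining property of the forgetful cycle directly supplies a timed word $\sigma_f$ with $(q_f, v_1) \xrightarrow{\sigma_f} (q_f, v_2)$. Concatenating the three pieces, $\sigma = \sigma_1 \sigma_f \sigma_2$ satisfies $(q, v) \xrightarrow{\sigma} (q', v')$, which proves the lemma.
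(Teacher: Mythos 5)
Your proof follows essentially the same route as the paper's: connect $(q,R)$ forward to a state on a forgetful cycle, use forgetfulness to move between the two valuations needed on that cycle, and connect from the cycle to $(q',v')$, concatenating the three timed words. You are in fact somewhat more explicit than the paper about the one delicate step (that the path leaving the cycle can be realized so as to hit the \emph{exact} target valuation $v'$), which the paper simply asserts; otherwise the arguments coincide.
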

\begin{proof}

As $C$ is a  thick component, there is a forgetful cycle $\pi$ and a path from $(q,R)$ to  a state 
$(q_0,R_0)$ on the cycle $\pi$ and a path from $(q_0,R_0)$  to $(q',R')$.  There is a direct forgetful path $\sigma$ from   $(q,R)$ to $ (q',R')$, such that for all valuations $v\in R$ there is a
 $v'\in R'$ such that 
$ (q,v) \xrightarrow{\sigma} (q',v')$. 
The length   of $\sigma$  is less than $3.\kappa.m $ and its weight is less than
$3.\kappa.m.B $. These bounds follow the analysis of the monoid $\mathcal{M}$ of the orbit graphs \cite{A15} whose size is exponential in the number of clocks. The use of Simon's factorization \cite{Si90} adds another exponential factor in the number of clocks. $\Box$
 \end{proof}
 
 We decompose any word into compatible factors for the component $C$ and introduce the notion of a cut with a cost.  The sum $V$ of the costs of  the different cuts is the key parameter of the decomposition.\\

\begin{definition}\label{cutdef}

A {\em  cut for $C$} in a timed word  $w=(a_i,\tau_i)_{1\leq i \leq n}$  is a decomposition of $w$ into the longest possible compatible prefix $w_1$, a letter $(a_i,\tau_i)$ and a  suffix
$w'_1$, i.e. $w=w_1.(a_i,\tau_i).w'_1$, such that $w_1$ is compatible for  $C$ but $w_1.(a_i,\tau_i)$ is not compatible for $C$.  If $(a_i,\tau_i)$ as a single letter is compatible for $C$, the cut is {\em weak} otherwise the cut is {\em strong}.
\end{definition}

The correction strategy depends on the two types of cuts.
\begin{itemize}
\item In a weak cut, let $(a_i,\tau_i).w'_2$  be the longest compatible timed word from some state $(q,R)$ of $C$ of the timed word $(a_i,\tau_i).w'_1$. 
Lemma \ref{corr} provides a  link $\sigma$ before the letter  $ (a_{i},\tau_{i})$. The edit cost is then 
$c  \leq 3\kappa m B$, because 
the weight of $\sigma$ 
is less than $3\kappa m B$. 

\item In a strong cut, $(a_i,\tau_i)$  is not compatible. If   there exists  $\tau'_i \neq \tau_i $,  such that $(a_i,\tau'_i)$ is compatible, we use a link $\sigma$ and modify $\tau_i$: the correction cost is at most $c= \mid \tau_i-\tau'_i  \mid +3\kappa m B$. If it is not the case,  we erase  the letter  and the cost is $c=  \tau_i$.
\end{itemize}

\begin{figure}[h]  
\begin{center}  

\includegraphics[height=7.69cm]{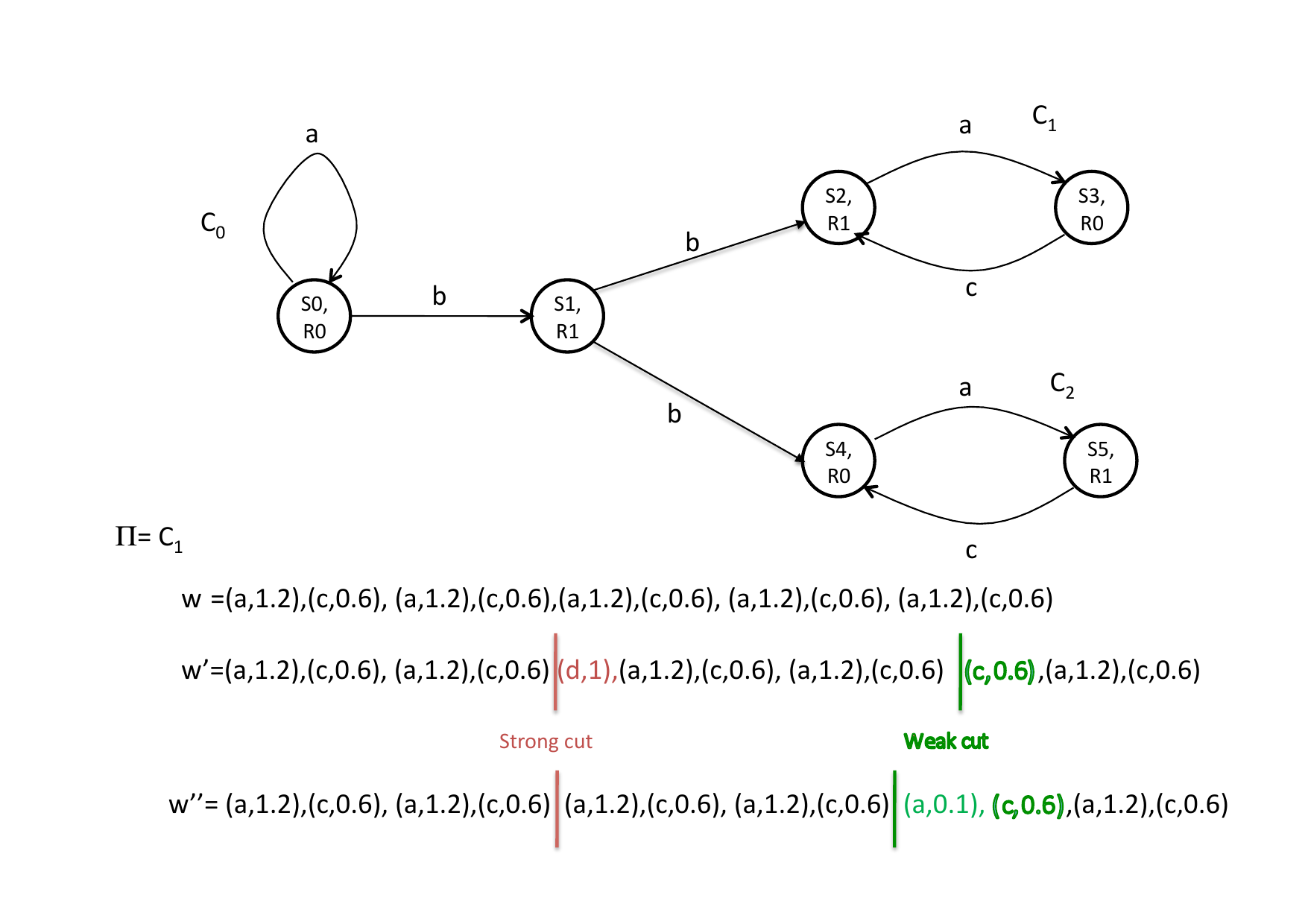}  
\caption{The word $w \in L_f( \Pi)$ for the region automaton of $A_0$ and $\Pi=C_1$, whereas $w' \notin L_f( \Pi)$: it has a strong cut and a weak cut. The word $w''$ is a possible correction of $w'$.} \label{fcuts}
\end{center}  
\end{figure}

We  then write:
$w =w_1 ~|_{c} ~w_2$ where $c$ is the cost of the correction and $w_2$ is the corrected suffix: for the strong cut, we removed a letter and for the weak cut we added a link. Figure \ref{fcuts} first  shows a strong cut and then a weak cut for the word $w'$. The corrected subwords determine $w''$.

We iterate this process on $w_2$  starting in an arbitrary state $(q,R)$ and $v\in R$.  If we decompose  $w_2 =w_{2,1} ~| ~w_{2,2}$, we write
$w =w_1~|_{c_{1}}~w_2~|_{c_{2}}~w_3$ instead of $w =w_1~|_{c_{1}}~w_{2,1}
~|_{c_{2}}~w_{2,2}$.

We now define  the algorithmic {\em decomposition}, written as:
$$w =w_1~|_{c_{1}}~w_2~|_{c_{2}}~w_3~_{c_{3}}|~...~|_{c_{h}}~w_{h+1}$$

%\begin{definition}\label{cutdecomposition}

\subsubsection{A {\em  decomposition} for a component $C$ of a timed word and its associated cost $V$}

A  $C$ {\em  decomposition} of a timed word  $w=(a_i,\tau_i)_{1\leq i \leq n}$  is the recursive process which given:
$$w =w_1~|_{c_{1}}~w_2~|_{c_{2}}~w_3~_{c_{3}}|~...w_{i}~|_{c_{i}}~w_{i+1}$$
at stage $i$ of cost $V(i)$ constructs
$$w =w_1~|_{c_{1}}~w_2~|_{c_{2}}~w_3~_{c_{3}}|~...~w_{i+1} ~|_{c_{i+1}}~w_{i+2}$$
at stage $i+1$ of cost $V(i+1)$ as follows:

If $i=1$, it is the cut construction and $V(1)=c$. If $i >1$, we assume that $w_{i+1}$ was corrected at stage $i$, with a deletion or an insertion of a link and a modification in the case of a strong cut at stage $i$, or with the insertion of a link $\sigma$ in the case of a weak cut at stage $i$. 

\begin{itemize}
\item  $w_{i+1}$  starts with an incompatible letter which needs to be  erased. At stage $i+1$ the decomposition is:

$$w =w_1~|_{c_{1}}~w_2~|_{c_{2}}~w_3~_{c_{3}}|~...w_{i} ~|_{c_{i}}~|_{c_{i+1}}~w_{i+2}$$
 and $(a_i,\tau_i)w_{i+2}=w_{i+1}$. The new $w_{i+1}$ is empty.
 The cost $V(i+1)$ is the cost $V(i)$ plus the deletion cost $c_{i+1}$.
 
 \item in all other cases, the new $w_{i+1}$ contains at least the link. It is the longest $C$ compatible segment of the old $w_{i+1}$. The new decomposition is:
 $$w =w_1~|_{c_{1}}~w_2~|_{c_{2}}~w_3~_{c_{3}}|~...w_{i} ~|_{c_{i}}~w_{i+1}~|_{c_{i+1}}~w_{i+2}$$
 The cost $V(i+1)$ is the cost $V(i)$ plus the insertion cost and the possible modification cost in the case of a strong cut, written $c_{i+1}$.\\

\end{itemize}

%\end{definition}

If  there are $h$ cuts of total value $V= \sum_{j=1...h}~ c_{j}$, for the $C$ decomposition of $w$, 
%Each $w_i$ is the largest compatible factor of $w$  starting at the $(i-1)$-th cut to  the $i$th cut.\\
let $$c_s =\sum_{{\rm strong~  cut}~ j} ~c_{j}$$  the total cost of the strong cuts and $$c_w =\sum_{{\rm weak ~ cut}~ j}~ c_{j}$$ the total cost of the weak cuts.

\begin{lemma}\label{close}
If $w $  has  cuts of total cost $V$  for $C$, then  $w$ is  $\frac{V}{T}$-close to the  language $L_f(C)$.
\end{lemma}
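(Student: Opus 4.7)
The plan is to construct, from the cuts decomposition of $w$, a timed word $w' \in L_f(C)$ whose absolute timed edit distance from $w$ is at most $V$. Since $|w| = T$, the relative distance then satisfies $\dist(w,w') = \Dist(w,w')/\max(|w|,|w'|) \leq V/T$, which is exactly the claim.

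I would traverse the algorithmic decomposition $w = w_1 \,|_{c_1}\, w_2 \,|_{c_2}\, w_3 \cdots |_{c_h}\, w_{h+1}$ and apply at each cut the local correction prescribed just before the statement. At a weak cut $j$, Lemma \ref{corr} supplies a timed word $\sigma_j$ of weight $c_j$ going from the state reached at the end of the local run for $w_j$ to the starting state from which the decomposition reads the cut letter followed by $w_{j+1}$ as the longest compatible continuation in $C$; this $\sigma_j$ is inserted at the cut position. At a strong cut $j$, I either replace the offending letter $(a,\tau)$ by a compatible $(a,\tau')$ at cost $c_j = |\tau - \tau'|$, or delete it at cost $c_j = \tau$, invoking Lemma \ref{corr} once more if needed to reposition states within $C$ so that the corrected letter is consistent with both the preceding and the subsequent run fragments.

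After these $h$ corrections, the factors $w_1, \ldots, w_{h+1}$ together with the interleaved corrections form a single continuous local run through $C$: each $w_i$ already carries such a run by definition of the decomposition, and the inserted links or modified letters stitch the end state of one run to the starting state of the next. Consequently $w' \in L_f(C)$. The total absolute cost of the edits performed is exactly $\sum_{j=1}^h c_j = V$, so $\Dist(w,w') \leq V$, and the inequality $\max(|w|,|w'|) \geq |w| = T$ yields $\dist(w,w') \leq V/T$ as required.

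The main obstacle is the state bookkeeping along the corrected run: at each cut one must pick, for a weak cut, the endpoint of the link $\sigma_j$ and, for a strong cut, the state used to absorb the modification, so that successive local runs concatenate coherently into a single global run of the region automaton restricted to $C$. The thickness hypothesis on $C$, i.e.\ the existence of a forgetful cycle, is precisely what makes this possible: by Lemma \ref{corr}, any two valuations in any two regions of $C$ are connected by a timed word of bounded weight ($\leq 3mB$), so the required corrections always exist and have the costs already accounted for in the $c_j$.
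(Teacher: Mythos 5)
Your proposal is correct and follows essentially the same route as the paper's own (much terser) proof: apply the local correction of Lemma~\ref{corr} at each cut, stitch the compatible factors $w_1,\dots,w_{h+1}$ into a single local run through $C$, and bound the total edit cost by $\sum_{j} c_j = V$, hence the relative distance by $V/T$. Your explicit handling of the state bookkeeping at the cuts is in fact more detailed than what the paper itself writes.
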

\begin{proof}
Lemma \ref{corr} indicates that we can always correct a cut and start from an arbitrary new state. 
For each cut $i$ of cost $c_i$, we have a correction of  weight at most $c_i$. Hence
a total relative distance of at most  $\frac{V}{T}$ to the language of $C$.$\Box$
\end{proof}

By the contraposition of lemma \ref{close},
if $w $  of weight $T$ is  $\eps$-far from $L_f(C)$,   then $V\geq \eps.T$. We take some sample $u$ of weight at least $2k$ with the  weighted  time distribution $\mu$, for some constant $k$ we define later. We want to show that the probability that $u$ is incompatible is a constant independent of $T$, which only depends on $\eps$ and on  the automaton. We need to examine $2$ cases: either the strong cuts are dominant, i.e. $c_s \geq \eps.T/2$ or  the weak cuts are dominant, i.e. $c_w \geq \eps.T/2$. \\

\subsubsection{Dominant strong cuts.}
In a  strong cut, the cost of
the correction can be high, of the order of $\tau_{i}$.
A sample $u$ of one letter which contains
a strong cut is incompatible for $C$ whereas a sample which contains a weak cut may be compatible, as we show in section \ref{dwc}.\\

\begin{lemma}\label{sc}
If $w$ is $\eps$-far from the language $L_f(C)$ and $c_s \geq \eps.T/2$,   a sample  from the weighted time  distribution $\mu$,  has a probability  greater than $ \eps/2$ to be incompatible.

\end{lemma}
\begin{proof}
A sample $u$ of one letter taken from $\mu$ has a probability proportional to its weight. The weight of the cut is less than the weight of the letter witnessing the strong cut. If the sum of the weights of all the strong cuts is greater than $\eps.T/2$, the probability to select one is at least $ \eps/2$.$\Box$
\end{proof}

\subsubsection{Dominant weak cuts.}\label{dwc}
We now take samples as factors of weight at least $2k$, where $k$ depends on the automaton constants ($m$ and $B$) and the approximation factor $\eps$, determined in Lemma \ref{basic}. We select a starting position according to $\mu$ and the following letters until the total weight is at least $2k$. A sample which contains one weak cut may be compatible, whereas a sample which contains two consecutive weak cuts is  incompatible. 
A sample $u$ starting before the cut may be compatible for $C$ because we consider all possible states of $C$ as a starting state of $u$. \\

At the first cut we again consider all possible states  of $C$ and choose the state for which  the longest possible factor is compatible. All the runs  block before  the second cut or  precisely at the second  cut.  Hence the sample is incompatible.
We  will prove that for factors $u$ of $w$ of weight greater than $2k$, a large proportion will contain two weak cuts and hence will be  incompatible.\\

Let $\alpha_i$  for $i\geq 1$ be the number of $w_j$ in the decomposition of $w$ along the cuts, whose weight is larger
 than $2^{i-1}$ and less than $2^{i}$:
$$\alpha_i=|\{w_j:~ 2^{i-1} \leq |w_j|  <2^i \}|$$
where $|w_j|$ is the weight of $w_j$. Let $\alpha_0=|\{w_j:~ 0 \leq |w_j|  <1 \}|$.
By
 definition $h=\sum_{i \geq 0} \alpha_i$ is the total number of cuts.

A {\em small block } is a $w_j$ whose weight is smaller than  $ 24\kappa m B/\eps$, which will later be the value of $k$. Otherwise it is a {\em large block}. We need to estimate
$\sum_{0 \leq i _\leq i_l} \alpha_i $ when we choose $i_l$ as the smallest integer such that $2^{i_l}\geq 24\kappa m B/\eps$, in order to bound the number of   $w_i$ of weight smaller than $k$.
Let $$\beta= \sum_{ i  \geq i_l} \alpha_i  ~~ ~~ ~~ ~~  \gamma= \sum_{ i  < i_l} \alpha_i$$
First let us relate $\beta$, the number of large  blocks with $\gamma $, the number of small blocks when the weak cuts dominate, i.e. $c_w \geq \eps.T/2$.

\begin{lemma}(Counting cuts)\label{cut}
If $w$ is $\eps$-far from $C$ and $c_w \geq \eps.T/2$, there  is an $i_l$ such that  $\gamma \geq 3.\beta $.
\end{lemma}
\begin{proof}

There are at most $T/ 2^{i_l}$ feasible  $w_j$ of weight larger than $2^{i_l}$, i.e.
$$\beta \leq T/ 2^{i_l}    \label{beta}$$
Hence 
$h= \sum_i \alpha_i= \gamma +\beta \geq   \eps.T/6\kappa m B $ because $c_w \geq \eps.T/2$ and each weak cut has a correction of cost at most  $3\kappa m B$, as shown in Lemma \ref{corr}.

$$\gamma \geq \eps.T/6\kappa m B - \beta  $$
Let $i_l$ be the smallest integer such that 
$ 24\kappa m B/\eps \leq 2^{i_l}$. Then 
$$\beta  \leq  T/ 2^{i_l} \leq \eps.T/24\kappa m B \label{beta}~~ ~~ ~~  ~~ ~~ ~~    (*)$$
$$\gamma \geq  \eps.T/6\kappa m B - \beta  \geq \eps.T/6\kappa m B - \eps.T/24\kappa m B =\eps. T/8\kappa m B $$
$$\gamma \geq 3.\beta $$
$\Box$
\end{proof}

If we take samples of weight  $2k$, we now prove that the probability to obtain  a sample with two weak cuts, hence an incompatible sample, is greater than some constant.

\begin{lemma}\label{basic}
If $w$ is $\eps$-far from the language  $L_f(C)$  and  $c_w \geq \eps.T/2$ and $k= 24\kappa m B/\eps$, then a sample $u$ of weight $2k$ from the weighted time  distribution $\mu$ has a probability greater than $ \delta=3.\eps^2/5$ to be incompatible.
\end{lemma}
\begin{proof}
We estimate the probability that  $u$ contains two consecutive cuts, hence $u$ is incompatible.
Because $c_w \geq \eps.T/2$, as in Lemma \ref{cut}:
 
$$h=\sum_i \alpha_i \geq  \eps.T/6\kappa m B  $$

We say that {\em $u$ is in $w_j$} if the first letter of $u$ is one of the letters of $w_j$. Le us show that  if  we take a one letter sample $u$ with the weighted time distribution, then:
$$Prob[ u ~{\rm is~in}~w_j \wedge |w_{j}| \leq k]  \geq  4\eps/ 5 \label{p1}~~ ~~ ~~  ~~ ~~ ~~  ~~ ~~ ~~   (1) $$
$$Prob[ |w_{j+1}| \leq k~~ | ~~  u ~{\rm is~in}~w_j \wedge|w_{j}| \leq k] \geq 3\eps/4~~ ~~ ~~  ~~ ~~ ~~  ~~ ~~ ~~   (2) $$

For the first inequality (1), consider the correction where we erase all the small blocks, at a cost of
$Prob[ u ~{\rm is~in}~w_j \wedge |w_{j}| \leq k] .T$ and then correct at most  $\beta$ large blocks, at a cost of $\beta.3\kappa m B$. As the word is $\eps$-far, then:
  $$Prob[ u ~{\rm is~in}~w_j \wedge |w_{j}| \leq k] .T+\beta.3\kappa m B \geq \eps.T$$
  $$Prob[u ~{\rm is~in}~w_j \wedge  |w_{j}| \leq k] .T \geq \eps.T-\beta.3\kappa m B$$
We use the bound $(*)$ on $\beta$ in lemma \ref{cut}.
$$ Prob[u ~{\rm is~in}~w_j \wedge  |w_{j}| \leq k] .T \geq \eps.T-\eps.T.3\kappa m B/24\kappa m B \geq 4\eps.T/5$$
  $$Prob[ u ~{\rm is~in}~w_j \wedge |w_{j}| \leq k] \geq 4\eps/5$$

For the second  inequality (2), we have a conditional probability: we measure the probability, given that we hit a small block $j$, that the next block $j+1$ is also small. We therefore measure the probability that a sample hits a small block followed by another small block. Consider the sequences of consecutive small blocks, which exist since  $\gamma \geq 3\beta $ from lemma \ref{cut}. If we erase all the small blocks which have a small successor, we capture the small blocks followed by another small block, i.e. the event we want to measure when we take a sample of weight at least $2k$. There remains small blocks followed by large blocks, at least one small block when all the small blocks were consecutive and at most $\beta$ small blocks in the worst case.  We must therefore correct at most  $2\beta$ cuts, for the large blocks and the remaining small blocks. The cost of the erasure is: 
$$Prob[ | w_{j+1}| \leq k~~ | ~~ u ~{\rm is~in}~w_j \wedge  |w_{j}| \leq k] .T$$ and the correction cost is at most $2 \beta.3\kappa m B$. As the word is $\eps$-far, then:
  $$Prob[  | w_{j+1}| \leq k~~ | ~~  u ~{\rm is~in}~w_j \wedge |w_{j}| \leq k] .T+2\beta.3\kappa m B \geq \eps.T$$
 $$Prob[  | w_{j+1}| \leq k~~ | ~~ u ~{\rm is~in}~w_j \wedge  |w_{j}| \leq k] .T  \geq \eps.T -2\beta.3\kappa m B $$
We use the bound $(*)$ on $\beta$ in lemma \ref{cut}.
 $$Prob[  | w_{j+1}| \leq k~~ | ~~ u ~{\rm is~in}~w_j \wedge  |w_{j}| \leq k] .T\geq \eps.T- 2\eps.T.3\kappa m B/24\kappa m B$$
$$Prob[  | w_{j+1}| \leq k~~ | ~~ u ~{\rm is~in}~w_j \wedge  |w_{j}| \leq k] .T \geq 3\eps.T/4$$

We can then bound the probability that a sample of weight $2k$ is incompatible: it is greater than the probability that a sample $u$ contains $2$ successive small blocks.
$$Prob[ {\rm sample ~} u   {\rm ~of ~ weight ~} 2k {\rm ~is ~ incompatible}  ] \geq $$
$$Prob[ u ~{\rm is~in}~w_j \wedge |w_{j}| \leq k]. Prob[ |w_{j+1}| \leq k~~ | ~~ u ~{\rm is~in}~w_j \wedge  |w_{j}| \leq k] $$
$$Prob[ {\rm sample ~} u   {\rm ~of ~ weight ~} 2k {\rm ~is ~ incompatible}  ]  \geq ( 4\eps/5).(3\eps/4)\geq\delta=3\eps^2/5$$
$\Box$
\end{proof}

\begin{theorem}\label{theoc}
For a thick component $C$, the Membership problem is testable.

\end{theorem}
\begin{proof}
Lemma \ref{correct} shows the first property of the tester: if $w$ is in  $L_f(C)$, the tester always accepts. Let us show the second property:
if $w$ is  $\eps$-far from the language of  $L_f(C)$,  a random sample $u$ of weight at least $2k$ with $k= 24\kappa m B/\eps$, taken from the weighted distribution $\mu$,  is incompatible  with probability at least $3.\eps^2/5$.
If $w$ is  $\eps$-far, the total weight of the  cuts must be large by lemma \ref{close}.
Either the total weight of the strong cuts is large, greater than $\eps.T/2$, or the weight of the weak cuts is large, greater than  $\eps.T/2$. In the first case, lemma \ref{sc} shows that a one letter sample is incompatible with high probability, at least $\eps/2$.
In the second more difficult  case, lemma \ref{basic} shows that a sample of weight at least $2k$ with $k= 24\kappa m B/\eps$ contains at least $2$ cuts with high probability,  at least $3.\eps^2/5$, hence is incompatible. $\Box$
\end{proof}

We need to generalize this argument to a sequence $\bar{\Pi}$ of extended components.

\subsection{Correction and Tester for one extended component}\label{c2}

An extended component consists of transient states which may be followed by a component. Let a transition be {\em bounded } if the constraint $g$ contains an atomic constraint of the form
 $x \bowtie c$ where $x \in X$, $c \in \mathbb{N}$ and $\bowtie \in \{<, \leq , =\}$ and
{\em unbounded } otherwise.
% if the constraint $g$ is of the form $x \bowtie c$ where $x \in X$, $c \in \mathbb{N}$ and $\bowtie \in \{\geq , >\}$. 
 Transient states with unbounded transitions have to be analysed differently from transient states with bounded transitions. We therefore distinguish the two cases.\\

\begin{figure}[h]  
\begin{center}  

\includegraphics[height=8.69cm]{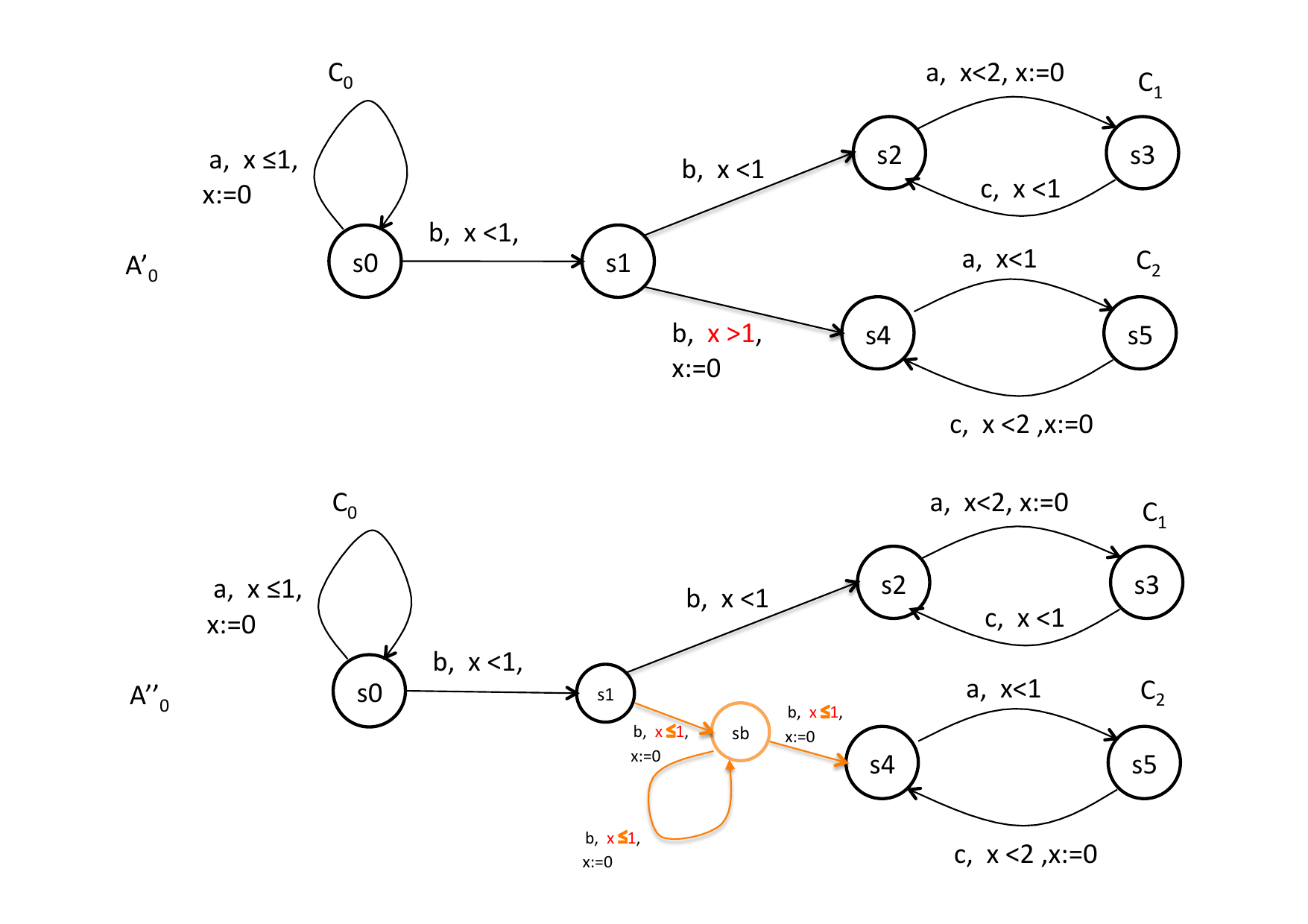}  
\caption{Automaton $A'_0$ with  an unbounded transition, and its bounded expansion $A''_0=\mathcal{T}_a(A'_0,1)$} \label{but}
\end{center}  
\end{figure}

In the example of Figure \ref{but}, the extended component $s_0.s_1.C_1$ of $A'_0$ has bounded transitions whereas its extended component $s_0.s_1.C_2$ has an unbounded transition. 

\subsubsection{Bounded transitions.}
Let $(q,R)$ be a transient state which appears in  $\Pi$. 
Let $w_1.(a_i, \tau_i).w'$ be the word $w$ where we read $(a_i, \tau_i)$  in state $(q,R)$ with the value $v\in R$.  Assume $w_1$ is compatible for a prefix  $\pi$ of $\Pi$ but 
$w_1.(a_i, \tau_i)$ is not compatible for the prefix  $\pi$  followed by the transient state $(q,R)$.
We introduce a cut before $(a_i, \tau_i)$. \\

 \begin{lemma}
The correction cost of a word $w$ for  a transient state with a bounded transition is less than $B$.
\end{lemma}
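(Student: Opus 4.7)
The plan is to establish the bound via the insertion operation, which the preamble of the lemma states is always available. Let $(q'',v'')$ be the state reached after reading $w_1$ along $\pi$, and let $(q'',R'')\xrightarrow{g,a,Y}(q,R)$ be the region-automaton edge from the end of $\pi$ into the transient state $(q,R)$. By compatibility of the regions with elapsing of time (axiom 2 of the region partition), there exists some delay $\tau'\in\mathbb{R}^+$ such that $v''+\tau'\models g$ and $[Y\leftarrow 0](v''+\tau')\in R$, i.e.\ the edge is enabled from the actual valuation $v''$.

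I would then argue that this $\tau'$ can always be chosen with $\tau'\leq B$. Every atomic constraint appearing in $g$ and in the definition of $R$ has the form $x\bowtie c$ with $c\leq B$, and $R''$ already confines each clock value $v''(x)$ either to a sub-interval of $[0,B]$ or to the unbounded ``above $B$'' zone. In the bounded case the threshold to cross is at most $B-v''(x)\leq B$; in the unbounded case the clock plays no role in any atomic constraint and imposes no restriction on $\tau'$. Hence only bounded clocks constrain $\tau'$, and a value $\tau'\leq B$ suffices. Inserting $(a,\tau')$ just before $(a_i,\tau_i)$ therefore fires the edge into $(q,R)$ at edit cost $\tau'\leq B$. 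This is the universal correction cost for a transient state; when additionally $a=a_i$ and some admissible $\tau'_i$ for the guard lies within reach of $\tau_i$, the modification alternative of cost $|\tau_i-\tau'_i|$ is used whenever it beats insertion.

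The main point to verify carefully is the bound $\tau'\leq B$, which rests on the fact that $B$ is the maximum constant appearing in any atomic clock constraint and therefore caps the diameter of the constraint-relevant portion of clock space that any single transition ever needs to cross. With this bound in hand the lemma follows immediately.
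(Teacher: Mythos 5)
Your proof takes essentially the same route as the paper's: the paper's own argument is just the two-case observation (modify $\tau_i$ into $\tau'_i$ if possible, otherwise insert a letter of delay at most $B$), and you supply the region-based justification for the bound $\tau'\leq B$ that the paper leaves implicit. The only caveat is the edge case of a strict guard such as $x>B$ reached from a valuation with $v''(x)=0$, where the required delay slightly exceeds $B$; this affects the paper's own statement equally and is immaterial to how the bound is used downstream.
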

\begin{proof}
We  insert  $(a'_i, \tau'_i)$ such that $\tau'_i \leq B$  before $(a_i, \tau_i)$ which is always possible. 
The correction cost is less than $B$. $\Box$
\end{proof}

After the bounded transitions, we correct for the component $C$ as in the previous section.

\subsubsection{ Unbounded transitions}\label{unb}

 Let $(q,R)$ be a transient state with an unbounded transition which appears in  $\Pi$. 
 We show how to generalize the analysis in this context and  treat an unbounded transition whose  weight can be arbitrarly high as  a new component.
Consider the transformation $\mathcal{T}$ of a timed word $w$ relative to a bound $c$
 where letter $(b,\tau_i)$ in position $i$ in $w$ with weight $\tau_i>c$ is decomposed into $\lceil \tau_i/c \rceil$ consecutive letters $(b,c)$ with a extra letter $(b,r)$ such that 
 $\tau_i=c. \lceil \tau_i/c \rceil  +r$ where $r<c$.
 For example $(b,10.5)$ for $c=1$ is decomposed into $10$ consecutive letters $(b,1)$ followed by $(b,0.5)$, which we  write as $(b,1)^{10},(b,0.5)$. Let $\mathcal{T}(w,i,c)$ be such a transformation. \\
 
{\bf Decomposition at position $i$ for $w$, $c$ and $\mathcal{A}$.} 
 
 $\bullet$ Erase all the letters before $i$ for a global cost $c_e$.

  $\bullet$ Correct the letter $i$ for the unbounded transition at a cost $c_a$ which is at most $B+\tau_i $. In fact, if the letter is incompatible, we erase it at the cost $\tau_i$ and insert the correct letter with a weight  at most  $ B$.  If the letter is compatible, we may have to modify its weight at a cost at most $B$.
  
  $\bullet$ Correct the rest of the word $w$ for the component $C$ at  a cost $c_C$. The total cost is at most $c_e+\tau_i + B+c_C$. 
The decomposition   {\em saturates} if  $$c_e + c_a+ c_C \geq \eps.T$$\\

 Let $\mathcal{A}$ be a timed automaton where the unbounded transient transition is:
 $$b, x>c, x:=0$$
 We  define $\mathcal{T}_a(\mathcal{A},c)$  a new  timed automaton where this unbounded transition is replaced by three bounded transitions $b, x\leq c, x:=0$ with one additional state with a loop, as in Figure \ref{but}. 
 %If the unbounded transition is without reset: $b, x>c$, it introduces a thin component and we don't consider it.\\
Without loss of generality we consider timed automata without transition with an unbounded clock constraint without reset.

The decomposition at position $i$ for $w$, $c$ and $\mathcal{A}$ generalizes to a decomposition
at position $i$ for $\mathcal{T}(w,i,c)$, $c$ and $\mathcal{T}_a(\mathcal{A},c)$, with exactly the same costs.

 \begin{lemma}\label{tclose}
If there is a choice of the $i$-th letter where the decomposition does not saturate, then $w$ is $\eps$-close to $L_f(\bar{\Pi})$ and $\mathcal{T}(w,i,c)$ is $\eps$-close to $L_f(\mathcal{T}_a(\bar{\Pi},c)$.
\end{lemma}
\begin{proof}
We just apply the corrections associated with the decomposition. We erase all the letters
before the $i$-th letter at a cost $c_e$, adjust the $i$th letter to the unbounded transition
at a cost $c_a$ and correct the suffix for $C$ as in section \ref{c1}. The total cost is less than $\eps.T$, hence $w$ is $\eps$-close to $L_f(\bar{\Pi})$. For $\mathcal{T}(w,i,c)$ the adjustment concerns the $i$-th letter and the factor replacing $(b,\tau_i)$, with the exact same cost $c_a$. Hence $\mathcal{T}(w,i,c)$ is $\eps$-close to $L_f(\mathcal{T}(\bar{\Pi},c)$. $\Box$
 \end{proof}

We will use the contraposition: if $w$ is $\eps$-far from  $L_f(\bar{\Pi})$, then for any guess $i$, the decomposition saturates.\\

\begin{lemma}\label{transf}
If $w \in L_f(\bar{\Pi})$ then there exists a choice of a $i$-th letter of $w$  such that $\mathcal{T}(w,i,c) \in L_f(\mathcal{T}_a(\bar{\Pi}),c)$. 
If $w$ is  $\eps$-far from the language of $\bar{\Pi}$, then for any choice of a $i$-th letter of $w$,
$\mathcal{T}(w,i,c)$ is   $\eps$-far from $ L_f(\mathcal{T}_a(\bar{\Pi}),c)$.
\end{lemma}
\begin{proof}
Consider a run for $w$ in $\bar{\Pi}$. There is a letter $(b,\tau_i)$ for the unbounded transition $x>c$. We choose this letter for the transformation $\mathcal{T}$ and  write
$\tau_i=c. \lceil \tau_i/c \rceil +r$ where $r<c$. We rewrite $(b,\tau_i)$ as $(b,c)^{\lceil \tau_i/c \rceil }.(b,r)$ as a word of at least $2$ letters. We simulate the run in $ L_f(\mathcal{T}_a(\bar{\Pi}),c)$.

If $w$ is  $\eps$-far from the language of $\bar{\Pi}$, then by lemma \ref{tclose} for any choice $i$, the decomposition saturates and $\mathcal{T}(w,i,c)$ is   $\eps$-far from $ L_f(\mathcal{T}_a(\bar{\Pi}),c)$. $\Box$ 
\end{proof}

In the section \ref{c3}, we show how the
 Tester rejects with constant probability, as we consider two components and only bounded transient transitions.

\subsection{Decomposition strategy for a sequence $\bar{\Pi}$  of  $2$ extended components with bounded transient transitions} \label{c3}

We now define a correction strategy for a timed word $w$ for a sequence $\bar{\Pi}$ when
a  timed word $w$ is  $\eps$-far from the language of $\bar{\Pi}$.
 Consider the case
 $\bar{\Pi}=\bar{C_1}, \bar{C_2}$, which we can later  generalize to an arbitrary $\bar{\Pi}$. The decomposition splits the word into $2$ parts and we apply the correction strategy for $\bar{C_1}$ on the first part $I_1$ and the correction strategy for $\bar{C_2}$ on the second part $I_2$. We define the {\em border} as the position of the  cut which  partitions the word $w$ into  $(I_1,I_2)$, i.e. the last cut for the correction on $C_1$.\\
 
\subsubsection{Decomposition strategy for a sequence $\bar{\Pi}=\bar{C_1}, \bar{C_2}$}
If a timed word $w$ is $\eps/2$-far from the language of $\bar{\Pi}$, there  is an 
$(I_1,I_2)$ decomposition, defined as follows.

\begin{itemize}
\item 
Start in any state of  $C_1$ and take the longest compatible prefix $w_1$. It determines a cut of cost $c$  with the  corrector for $C_1$. We continue  in a similar way  and accumulate the costs of the corrections. If we reach a cut whose total weight $V_c$ is at least $\eps.T/2$   the {\em border } is the position of this cut and $I_1$ is 
 the prefix and $I_2$  is the suffix.
 
 \item After we reach the border, we correct  the  word for $C_2$. If the total cost of the corrections reaches $\eps.T$, we say that {\em the decomposition saturates $w$} for $\bar{\Pi}=\bar{C_1}, \bar{C_2}$ .
\end{itemize}
 
% Notice that we ignore the bounded transient transitions.
The goal is to guarantee that at least $\eps.T/2$ error occurs for $\bar{C_1}$ in $I_1$ and  for $\bar{C_2}$ in $I_2$ if the word $w$ is $\eps$-far.

\begin{lemma}\label{equi}
If a  timed word $w$ is  $\eps$-far from the language of $\bar{\Pi}=\bar{C_1}, \bar{C_2}$ then  there is a border and  the decomposition saturates $w$.
\end{lemma}
\begin{proof}
By contraposition, we consider two cases. If  there is no border, then $w$ is $\eps$-close to $\bar{C_1}$ hence to $\bar{C_1}, \bar{C_2}$. If there is a border and the decomposition does not saturate then $w$ is close to $\bar{C_1}, \bar{C_2}$, as we can find a correction of total cost less than $\eps.T$.$\Box$
\end{proof}

%\begin{figure}[ht]

%\centerline{\includegraphics[height=2cm]{}}

%\caption{A possible decomposition of $w$ into feasible components for $\bar{\Pi}=\bar{C_1}, \bar{C_2}$.  }\label{decomposition}

%\end{figure}

\subsubsection{Tester for  $\bar{\Pi}$ of $2$ extended components with bounded transient transitions} \label{t2comp}

If $w$ is $\eps$-far from $\bar{\Pi}=\bar{C_1}, \bar{C_2}$, then there are many samples
$u$'s of weight $2k$ incompatible  for $C_1$ in the interval $I_1$ and many samples
$u$'s of weight $2k$ incompatible  for $C_2$ in the interval $I_2$. We write $u\in I_1$
to indicate that the sample $u$ is a subword of $I_1$. We  conclude,  in lemma \ref{main} below, that the Tester will reject with constant probability. 

\begin{lemma}\label{main}
If $w$ is  $\eps$-far from the language of $\bar{\Pi}=\bar{C_1}, \bar{C_2}$ and $k=48\kappa m B/\eps$,  then the Tester  along $\bar{\Pi}$ rejects with constant probability.
\end{lemma}
\begin{proof}
Assume $w$ is  $\eps$-far from $\bar{\Pi}=\bar{C_1}.\bar{C_2}$. By lemma \ref{equi}  we have a decomposition $(I_1,I_2)$.
Consider two distinct samples $u_1 <u_2$ taken independently of weight at least $2k$.    If $u_1$ is incompatible  for $C_1$
and $u_2$ is incompatible  for $C_2$, then the Tester rejects. Hence:

$$Prob[\mathrm{Tester ~ rejects}] \geq Prob[u_1 \mathrm{~incompatible ~ for ~C_{1}}\wedge  u_2 \mathrm{~incompatible ~ for ~C_{2}} ] $$
$$   \geq Prob[ (u_1 \in I_1 \wedge  u_1 \mathrm{~incompatible ~ for ~C_{1}})\wedge  (u_2\in I_2 \wedge  u_2  \mathrm{~incompatible ~ for ~C_{2}} ]  $$
These two events are independent because the samples are independent, hence we can rewrite the expression as:
$$    Prob[ (u_1 \in I_1 \wedge  u_1 \mathrm{~incompatible ~ for ~C_{1}})] .Prob[ (u_2\in I_2 \wedge  u_2  \mathrm{~incompatible ~ for ~C_{2}} ] $$

Let $\eps'=\eps/2$. From the Theorem \ref{theoc}, if
$k=24\kappa m B/\eps'=48\kappa m B/\eps$, then  
$$ Prob[ u_1 \mathrm{~incompatible ~ for ~C_{1}}| u_1 \in I_1] \geq 3\eps'~^2/5=3\eps^2/20$$
and
$$    Prob[ u_1 \in I_1 ] \geq (\eps/2)$$
Hence:
$$    Prob[ (u_1 \in I_1 \wedge  u_1 \mathrm{~incompatible ~ for ~C_{1}})] \geq (\eps/2). 3\eps^2/20$$
and similarly for $u_2$.
Hence:
$$Prob[\mathrm{Tester ~ rejects}] \geq  (3\eps^3/40)^2$$
$\Box$
\end{proof}

\subsection{Decomposition strategy for a sequence $\bar{\Pi}$ of $1$ or $2$ extended components with unbounded transient transitions} \label{c3u}

We apply the transformation of section \ref{unb} and consider the unbounded transient transitions as new components. We therefore have at least $2$ components and in addition a new component for each unbounded transition. We study this general case in section \ref{c3g}.

\subsection{Correction strategy and Tester for an arbitrary sequence $\bar{\Pi}$ of length $l$} \label{c3g}
Let $\bar{\Pi}=\bar{C_{1}}....\bar{C_{l}}$ be a sequence where $l$ is the number of components plus the number of unbounded transient transitions.
The decomposition
generalizes to $\bar{C_{1}}....\bar{C_{l}}$ by taking cuts for $\bar{C_{1}}$ of global cost greater than $\eps.T/l$, until a first border and a prefix $I_1$, taking cuts for $\bar{C_{2}}$ of global cost  at least $\eps.T/l$ until a second border $I_2$ and so on until   possible  cuts for $\bar{C_{l}}$ of global cost at least $\eps.T/l$ and a last border $I_{l-1}$.

  The  decomposition $(I_1,I_2,...I_{l-1})$ {\em saturates $w$} if  the total cost  is larger than $\eps.T$.\\

%The decomposition is not possible if one of the segment $I_j$ is empty. It may occur when $w$ is $\eps$-far from the language of $\bar{\Pi}$, if there are less than $l$large letters with at least one of them  of weight  at least $\eps.T/l$ which is incompatible for all $\bar{C_{1}}....\bar{C_{l}}$.

\begin{lemma}\label{equi1}
If a  timed word $w$ is  $\eps$-far from the language of $\bar{\Pi}=\bar{C_{1}}....\bar{C_{l}}$ then  there are $l-1$ borders  and  the decomposition saturates $w$.
\end{lemma}
\begin{proof}
By contraposition, we consider two cases. If  there  are less than $l-1$ borders,  then $w$ is $\eps$-close to a prefix of $\bar{\Pi}$ hence to $\bar{\Pi}$. If the decomposition does not saturate then $w$ is close to $\bar{\Pi}$, as we can find a correction of total cost less than $\eps.T$. $\Box$
\end{proof}

%\subsubsection{Tester for  $\bar{\Pi}$ with $l$ extended components with bounded transient transitions} \label{tester-b}

\begin{theorem}\label{theo}
\label{WordTesterReject}
If $A$ is a timed automaton with thick components only, the Membership problem is testable.
\end{theorem}
\begin{proof}
Lemma \ref{correct} shows the first property of the tester: if $w$ is in  $L_f(A)$, there is a  $\Pi$ such that   $w$ is in  $L_f(\Pi)$ and the tester for this $\Pi$ always accepts. Let us show the second property:
If $w$ is  $\eps$-far from the language $L_f(A)$, it is also  $\eps$-far from all $L_f(\Pi)$.
Let $\bar{\Pi}=\bar{C_1}...\bar{C_l}$ and $k=24l.\kappa m B/\eps$: let us show that   the Tester along the path   $\bar{\Pi}$ rejects with constant probability.
If $w$ is  $\eps$-far from $\Pi=\bar{C_{1}}....\bar{C_{l}}$, there is a decomposition $(I_1,I_2,...I_l)$ with non-empty segments by the lemma \ref{equi1}, the decomposition saturates $w$.
Consider $l$ samples $u_1 <u_2 <....<u_l$ taken independently of weight at least $2k$ which do not overlap.  
 If $u_1$ is incompatible  for $\bar{C_1}$
and $u_2$ is incompatible  for $\bar{C_2}$.... and $u_l$ is incompatible  for $\bar{C_l}$, then the Tester rejects. Hence:
$$Prob[\mathrm{Tester ~ rejects}] \geq Prob[u_1 \mathrm{~incompatible ~ for ~\bar{C_1}}\wedge  u_2 \mathrm{~incompatible ~ for ~\bar{C_2}} .... $$
$$ \wedge u_l \mathrm{~incompatible ~ for ~\bar{C_l}} ] $$
$$  Prob[ (u_i \mathrm{~incompatible ~ for ~\bar{C_i}})   \geq Prob[ (u_i \in I_i \wedge  u_i \mathrm{~incompatible ~ for ~\bar{C_i}}) ] $$
All these  events are independent, hence we can rewrite the expression as:

$$ \prod_i Prob[ (u_i \in I_i \wedge  u_i \mathrm{~incompatible ~ for ~\bar{C_i}}) ] $$

Let $\eps'=\eps/l$. From the lemma \ref{basic}, if
$k=24\kappa m B/\eps'=24l.\kappa m B/\eps$, then  
$$ Prob[ u_i \mathrm{~incompatible ~ for ~\bar{C_i}}| u_i \in I_i] \geq 3\eps'^2/5=3\eps^2/5l^2$$
and
$$    Prob[ u_i \in I_i ] \geq (\eps/l)$$

Hence:
$$  Prob[ (u_i \in I_i \wedge  u_i \mathrm{~incompatible ~ for ~\bar{C_i}}) ] \geq  (\eps/l).3\eps^2/5l^2=3\eps^3/5l^3$$
$$Prob[\mathrm{Tester ~ rejects}] \geq \prod_i Prob[ (u_i \in I_i \wedge  u_i \mathrm{~incompatible ~ for ~\bar{C_i}}) ]\geq  (3\eps^3/5l^3)^l$$
The Tester rejects with a probability greater than a function of $\eps$ and $l$, but independent of $T$.
$\Box$
\end{proof}
\section{Conclusion}
We introduced the {\em timed edit distance} between timed words and use it  to test the membership property of timed automata with thick components. We   select factors of a timed word proportional  to their weight  according to the weighted time  distribution $\mu$.  We followed the property testing framework and constructed a tester which selects finitely many samples of bounded weight to detect if a  timed word is accepted  or $\eps$-far from the language of the timed automaton. \\

In a streaming context,  {\em the weighted  samples}  can be taken online, and the tester provides an approximate decision to the Membership problem. \\

This work can be extended in several directions.  A  first direction  is the comparison of  the languages of two timed automata, an undecidable problem \cite{AD94,AM04}.
Let us say that two timed automata $\mathcal {A}_1$ and $\mathcal {A}_2$ are $\eps$-close  if any timed word of  $L(\mathcal {A}_1)$ is $\eps$-close to $L(\mathcal {A}_2)$ and symmetrically.
A natural question is to decide if two timed automata are close or far for finite words, as it is studied in \cite{fmr2010} in the case of classical automata. A second  direction would be to generalize to infinite words.
The distance can be extended to infinite words by taking the limits of the distance on their prefixes.
We can then ask for efficient probabilistic algorithms which approximate the equivalence of timed automata for finite and infinite timed words. \\

{\bf Acknowledgment.} The authors thank E. Asarin for pointing out the concept of forgetful cycles and  thick components \cite{A15} and Aldric Degorre for the bound on the weight of a forgetful cycle. We also thank the two reviewers for their constructive comments.

%\nocite{*} % to test all bib entrys
%\include{at9.bbl}
\bibliography{reference1}
\bibliographystyle{plain}
\end{document}